\theoremstyle{plain}
\newtheorem{theorem}{Theorem}
\newtheorem{lemma}{Lemma}
\theoremstyle{definition}
\newtheorem{proposition}{Proposition}
\newtheorem{definition}{Definition}
\newtheorem{remark}{Remark}
\newtheorem{assumption}{Assumption}
\newtheorem{example}{Example}
\newtheorem{problem}{Problem}
\newcommand{\T}{\mathcal{T}} 
\newcommand{\A}{\mathcal{A}} 
\newcommand{\init}{\mathit{init}}
\newcommand{\AP}{\mathit{AP}} 
\title{\LARGE \bf
Cooperative Planning for Coupled Multi-Agent Systems under Timed Temporal Specifications 
}
\author{Alexandros Nikou, Dimitris Boskos, Jana Tumova and Dimos V. Dimarogonas
\thanks{The authors are with the ACCESS Linnaeus Center, School of Electrical
Engineering, KTH Royal Institute of Technology, SE-100 44, Stockholm,
Sweden and with the KTH Center for Autonomous Systems. Email: {\tt\small \{anikou, boskos, tumova, dimos\}@kth.se}. This work was supported by the H2020 ERC Starting Grant BUCOPHSYS, the Swedish Research Council (VR) and the Knut och Alice Wallenberg Foundation.}
}
\begin{document}

\maketitle
\thispagestyle{empty}
\pagestyle{empty}

\begin{abstract}
This paper presents a fully automated procedure for controller synthesis for multi-agent systems under coupled constraints. Each agent has dynamics consisting of two terms: the first one models the coupled constraints and the other one is an additional  control input. We aim to design these inputs so that each agent meets an individual high-level specification given as a Metric Interval Temporal Logic (MITL). First, a decentralized abstraction that provides a time and space discretization of the multi-agent system is designed. Second, by utilizing this abstraction and techniques from formal verification, we provide an algorithm that computes the individual runs which provably satisfy the high-level tasks. The overall approach is demonstrated in a simulation example.
\end{abstract}

\section{Introduction}

Cooperative control of multi-agent systems has traditionally focused on designing distributed control laws in order to achieve global tasks such as consensus, formation and rendez-vous (\cite{ren_beard_consensus, olfati_murray_concensus, jadbabaie_morse_coordination, shi2013robust, tanner_flocking}) and at the same time fulfill properties such as network connectivity (\cite{egerstedt_formation, zavlanos_2008_distributed}). Over the last few years,  multi-agent control under complex high-level specifications has been gaining significant attention. In particular, coordination of multi-robot teams under qualitative temporal tasks constitutes an emerging application in this field. In this work, we aim to additionally introduce specific time bounds into these tasks, in order to include specifications such as ``Visit region $A$ within 5 time units" or ``Periodically survey regions $A_1$, $A_2$, $A_3$, avoid region $X$ and always keep the longest time between two consecutive visits to $A_1$ below 20 time units".

The specification language that has primarily been used to express the tasks is Linear Temporal Logic (LTL) (see, e.g., \cite{loizou_2004}). LTL has been proven a valuable tool for controller synthesis, because it provides a compact mathematical formalism for specifying desired behaviors of a system. There is a rich body of literature containing algorithms for verification and synthesis of multi-agent systems under temporal logic specifications (\cite{guo_2015_reconfiguration, frazzoli_vehicle_routing}). A common approach in multi-agent planning under LTL specifications is the consideration of a centralized, global task for the team, which is then decomposed into local tasks to be accomplished by the individual agents (see \cite{belta_regular2, belta_cdc_reduced_communication}). A three-step hierarchical procedure to address this problem is described as follows (\cite{fainekos_planning}): first, the robot dynamics is abstracted into a discrete transition system using sampling or cell decomposition methods based on triangular, rectangular or other partitions. Second, invoking ideas from formal verification, a discrete plan that meets the high-level task is synthesized. Third, the discrete plan is translated into a sequence of continuous controllers for the original system.

Time constraints in the system modeling have been considered e.g., in \cite{belta_optimality, quottrup_timed_automata, belta_2011_timed_automata}. Both the aforementioned, as well as most existing works on multi-agent planning, consider temporal properties which treat time in a qualitative manner. However, for real applications, a multi-agent team might be required to perform a specific task within a certain time bound, rather than at some arbitrary time in the future, i.e. in a quantitative manner.  Timed specifications have been considered in \cite{liu_MTL, murray_2015_stl, baras_MTL_2016, topcu_2015, fainekos_mtl_2015_robot}. However, all these works are restricted to single agent planning and are not extendable to multi-agent systems in a straightforward way.

The multi-agent case has been considered in \cite{frazzoli_MTL}, where the vehicle routing problem was addressed, under Metric Temporal Logic (MTL) specifications. The corresponding approach does not rely on automata-based verification, as it is based on a construction of linear inequalities and the solution of a Mixed-Integer Linear Programming (MILP) problem. An automata-based solution was proposed in our previous work \cite{alex_2016_acc}, where Metric Interval Temporal Logic (MITL) formulas were introduced in order to synthesize controllers such that every agent fulfills an individual specification and the team of agents fulfill a global specification.

In \cite{alex_2016_acc}, the abstraction of the dynamics was given and an upper bound of the time that each agent needs to finish a transition from one region to another was assumed. Furthermore, potential coupled constraints between the agents were not taken into consideration. In this work, we aim to address the aforementioned issues. The dynamics of each agent consists of two parts: the first part is a consensus type term representing the coupling between the agent and its neighbors, and the second one is an additional control input which will be exploited for high-level planning. Hereafter, it will be called a free input. A decentralized abstraction procedure is provided, which leads to an individual Transition System (TS) for each agent and provides the basis for high-level planning. Additionally, this abstraction is associated to a time quantization which allows us to assign precise time durations to the transitions of each agent.

There is a rich literature on abstractions for dynamical systems (see e.g., \cite{alur_2000_discrete_abstractions, zamani_2012_symbolic, belta_2013_time-constraints, liu_2016_abstraction}). Multi-agent abstractions have been addressed in \cite{abate_2014_finite_abstractions, rungger_compositional_abstractions_2015, tabuada_2015_compositional, pola_2016_symbolic,  boskos_cdc_2015}. Motivated by \cite{boskos_cdc_2015}, we start from the dynamics of each agent and we construct a TS for each agent in a decentralized manner. An individual task is assigned to each agent and we aim to design the free inputs so that each agent performs the desired individual task within specific time bounds. To the best of the authors' knowledge, this is the first time that a fully automated framework for multi-agent systems consisting of both constructing an abstraction and conducting high-level timed temporal logic planning is considered. Hence, this works lies in the intersection of the fields of multi-agent systems, abstractions and timed formal verification.

The contribution of this paper is to provide an automatic controller synthesis method of a general framework of coupled multi-agent systems under high-level tasks with timed constraints. Compared to the existing works on multi-agent planning under temporal logic specifications, the proposed approach yields the first solution to the problem of planning of dynamically coupled multi-agent systems  under timed temporal specifications in a distributed way.

The remainder of the paper is structured as follows. In Sec. \ref{sec: preliminaries} a description of the necessary mathematical tools, the notations and the definitions are given. Sec. \ref{sec: prob_formulation} provides the dynamics of the system and the formal problem statement. Sec. \ref{sec: solution} discusses the technical details of the solution. Sec. \ref{sec: simulation_results} is devoted to a simulation example. Finally, the conclusions and the future work directions are discussed in Sec. \ref{sec: conclusions}.

\section{Notation and Preliminaries} \label{sec: preliminaries}

\subsection{Notation}

We denote by $\mathbb{R}, \mathbb{Q}_+, \mathbb{N}$ the set of real, nonnegative rational and natural numbers including 0, respectively. Also, define $\mathbb{T}_{\infty} = \mathbb{T} \cup \{\infty\}$ for a set $\mathbb{T} \subseteq \mathbb{R}$. Given a set $S$, we denote by $|S|$ its cardinality and by $2^S$ the set of all its subsets. For a subset $S$ of $\mathbb{R}^n$, we denote by $\text{cl}(S), \text{int}(S)$ and $\partial S = \text{cl}(S) \backslash \text{int}(S)$ its closure, interior and boundary, respectively and $\backslash$ is used for set subtraction. The notation $\|x\|$ is used for the Euclidean norm of a vector $x \in \mathbb{R}^n$ and $\|A\| = \text{max} \{\|Ax\| : \|x\| = 1\}$ for the induced norm of a matrix $A \in \mathbb{R}^{m \times n}$. Given a matrix $A$, the spectral radius of $A$ is denoted by $\lambda_{\text{max}}(A) = \text{max} \{|\lambda| : \lambda \in \sigma(A) \}$, where $\sigma(A)$ is the set of all the eigenvalues of $A$.

\subsection{Multi-Agent Systems}
\label{sec:prelims:system}

Consider a set of agents $\mathcal{I} = \{ 1,2, \ldots, N\}$ operating in $\mathbb{R}^n$. The topology of the multi-agent network is modeled through a static undirected graph $\mathcal{G} = (\mathcal{I},\mathcal{E})$, where $\mathcal{I}$ is the set of nodes (agents) and $\mathcal{E} \subseteq \{ \{i,j\} : i,j \in \mathcal I, i \neq j\}$ is the set of edges (denoting the communication capability between neighboring respective agents).  For each agent, its neighbors' set $\mathcal{N}(i)$ is defined as $\mathcal{N}(i) = \{j_1, \ldots, j_{N_i} \} = \{ j \in \mathcal{I} : \{i,j\} \in \mathcal{E}\}$ where $N_i = |\mathcal N(i)|$.

Given a vector $x_i = (x^1_i, \ldots, x^n_i) \in \mathbb{R}^n$, the component operator $c(x_i, k) = x_i^k \in \mathbb{R}, k = 1, \ldots, n$ gives the projection of $x_i$ onto its $k$-th component (see \cite{mesbahi2010graph}). Similarly, for the stack vector $x = (x_1, \ldots, x_N) \in \mathbb{R}^{Nn}$ the component operator is defined as $c(x, k) = (c(x_1, k), \ldots, c(x_N, k))  \\ \in \mathbb{R}^{N}, k = 1, \ldots, n$. By using the component operator, the norm of a vector $x \in \mathbb{R}^{Nn}$ can be computed as $\|x\| = \left\{ \sum_{k = 1}^{n} \| c(x, k) \|^2 \right\}^{\frac{1}{2}}$.

The Laplacian matrix $L(\mathcal{G}) \in \mathbb{R}^{N \times N}$ of the graph $\cal{G}$ is defined as
$ L(\mathcal{G}) = D(\mathcal{G}) D(\mathcal{G})^{\tau}$ where $D(\mathcal{G})$ is the $N \times |\mathcal{E}|$ incidence matrix (\cite{mesbahi2010graph}). The graph Laplacian $L(\mathcal{G})$ is positive semidefinite and symmetric. By considering an ordering $0 = \lambda_1(\mathcal{G}) \leq \lambda_2(\mathcal{G}) \leq \ldots \leq \lambda_N(\mathcal{G}) = \lambda_{\text{max}}(\mathcal{G})$ of the eigenvalues of $L(\mathcal{G})$ then we have that $\lambda_2(\mathcal{G}) > 0$ iff $\mathcal{G}$ is connected (\cite{mesbahi2010graph}).

We denote by $\tilde{x} \in \mathbb{R}^{|\mathcal{E}|n}$ the stack column vector of the vectors $x_i-x_j, \{i, j\} \in \mathcal{E}$ with the edges ordered as in the case of the incidence matrix. Thus, the following holds:
\begin{equation} \label{eq:xtilde}
\tilde{x} = D(\mathcal{G})^{\tau} x.
\end{equation}

\subsection{Cell Decompositions}

In the subsequent analysis a discrete partition of the workspace into cells will be considered which is formalized through the following definition. 

\begin{definition} \label{def: cell_decomposition}
A \textit{cell decomposition} $S = \{S_\ell\}_{\ell \in \mathbb{I}}$ of a set $\mathcal{D} \subseteq \mathbb{R}^n$, where $\mathbb{I} \subseteq \mathbb{N}$ is a finite or countable index set, is a family of uniformly bounded convex sets $S_\ell, \ell \in \mathbb{I}$ such that $\text{int}(S_\ell) \cap \text{int}(S_{\hat{\ell}}) = \emptyset$ for all $\ell, \hat{\ell} \in \mathbb{I}$ with $\ell \neq \hat{\ell}$ and $\cup_{\ell \in \mathbb{I}} S_\ell = \mathcal D$.
\end{definition}

\begin{example} \label{ex: example_0}
An example of a cell decomposition with $\mathbb{I} = \{1, 2, 3, 4, 5,6\}$ and $S = \{S_\ell\}_{\ell \in \mathbb{I}} 
= \{S_1, S_2, S_3, S_4, S_5, S_6\}$ is depicted in Fig. \ref{fig: example_0}. This cell decomposition will be used as reference for the following examples.
\begin{figure}[ht!]
	\centering
	\begin{tikzpicture}[scale = 0.5]
	
	\draw[step=2.5, line width=.04cm] (-2.5, -5.0) grid (0,0);
	\draw[line width=.04cm] (-7.5,0.0) -- (-2.5,0.0);
	\draw[line width=.04cm] (-7.5,-2.5) -- (-2.5,-2.5);
	\draw[line width=.04cm] (-7.5,-5.0) -- (-2.5,-5.0);
	\draw[step=2.5, line width=.04cm] (-10.0, -5.0) grid (-7.5,0);
	
	\node at (-8.7, -1.2) {$S_1$};
	\node at (-5.0, -1.2) {$S_2$};
	\node at (-1.3, -1.2) {$S_3$};
	\node at (-1.3, -3.7) {$S_4$};
	\node at (-5.0, -3.7) {$S_5$};
	\node at (-8.7, -3.7) {$S_6$};
	\end{tikzpicture}
	
	\caption{An example of a cell decomposition with $|\mathbb{I}| = 6$ cells}
	\label{fig: example_0}
\end{figure}
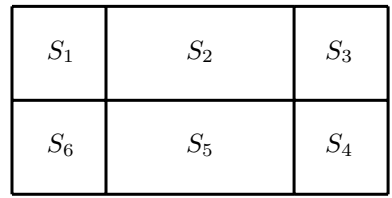
\end{example}

\subsection{Time Sequence, Timed Run and Weighted Transition System}

In this section we review some basic definitions from computer science that are required in the sequel. 

An infinite sequence of elements of a set $X$ is called an \textit{infinite word} over this set and it is denoted by $\chi = \chi(0)\chi(1) \ldots$ The $i$-th element of a sequence is denoted by $\chi(i)$.

\begin{definition} (\cite{alur1994}) A \textit{time sequence} $\tau = \tau(0) \tau(1) \ldots$ is an infinite sequence of time values $\tau(j) \in \mathbb{T} = \mathbb{Q}_{+}$, satisfying the following properties:
	\begin{itemize}
		\item Monotonicity: 
		$\tau(j) < \tau(j+1)$ for all $j \geq 0$.
		\item Progress: For every $t \in \mathbb{T}$, there exists 	$\ j \ge 1$, such that $\tau(j) > t$.
	\end{itemize}
\end{definition}

An \textit{atomic proposition} $p$ is a statement 
that is either True $(\top)$ or False $(\bot)$. 

\begin{definition} (\cite{alur1994})
	Let $\AP$ be a finite set of atomic propositions. A \textit{timed word} $w$ over the set $\AP$ is an infinite sequence $w^t = (w(0), \tau(0)) (w(1), \tau(1)) \ldots$ where $w(0) w(1) \ldots$ is an infinite word over the set $2^{\AP}$ and $\tau(0) \tau(1) \ldots$ is a time sequence with $\tau(j) \in \mathbb{T}, \ j \geq 0$.
\end{definition}

\begin{definition} \label{def: WTS}
	A Weighted Transition System (\textit{WTS}) is a tuple $(S, S_0, Act, \longrightarrow, d, AP, L)$ where
	$S$ is a finite set of states;
	$S_0 \subseteq S$ is a set of initial states;
	$Act$ is a set of actions;
	$\longrightarrow \subseteq S \times Act \times S$ is a transition relation;
	$d: \longrightarrow \rightarrow \mathbb{T}$ is a map that assigns a positive weight (time values in this framework) to each transition;
	$\AP$ is a finite set of atomic propositions; and
	$L: S \rightarrow 2^{AP}$ is a labeling function.
	For simplicity, the notation $s \overset{\alpha}{\longrightarrow} s'$ is used to denote that $(s, \alpha, s') \in \longrightarrow$ for $s, s' \in S$ and $\alpha \in Act$. Furthermore, for every $s \in S$ and $\alpha \in Act$ the operator $\text{Post}(s, \alpha) = \{s' \in S : (s, \alpha, s') \in \longrightarrow\}$ is defined.
\end{definition}

\begin{definition}\label{run_of_WTS}
	A \textit{timed run} of a WTS is an infinite sequence $r^t = (r(0), \tau(0))(r(1), \tau(1)) \ldots$,
	such that $r(0) \in S_0$, and for all $j \geq 1$, it holds that $r(j) \in S$ and $(r(j), \alpha(j), r(j+1)) \in \longrightarrow$ for a sequence of actions $\alpha(1) \alpha(2) \ldots$ with $\alpha(j) \in Act, \forall \ j \geq 1$. The \textit{time stamps} $\tau(j), j \geq 0$ are inductively defined as
	\begin{enumerate}
		\item $\tau(0) = 0$.
		\item $\displaystyle \tau(j+1) =  \tau(j) + d(r(j), r(j+1)), \ \forall \ j \geq 1.$
	\end{enumerate}
	Every timed run $r^t$ generates a \textit{timed word}
	$w(r^t) = 
	(w(0), \tau(0)) \ (w(1), \tau(1)) \ldots$
	over the set $2^{\AP}$ where $w(j) = L(r(j))$, $\forall \ j \geq 0$ is the subset of atomic propositions that are true at state $r(j)$. 
	
\end{definition}

\subsection{Metric Interval Temporal Logic}

The syntax of \textit{Metric Interval Temporal Logic (MITL)} over a set of atomic propositions $AP$ is defined by the grammar
\begin{equation} \label{eq: grammar}
	\varphi := p \ | \ \neg \varphi \ | \ \varphi_1 \wedge \varphi_2 \ | \ \bigcirc_I \varphi  \ | \ \Diamond_I \varphi \mid \square_I \varphi \mid  \varphi_1 \ \mathcal{U}_I \ \varphi_2
\end{equation}
where $p \in \AP$, and $\bigcirc$, $\Diamond$, $\square$ and $\mathcal U$ are the next, eventually, always and until temporal operator, respectively. $I \subseteq \mathbb{T}$ is a non-empty time interval in one of the following forms: $[i_1, i_2], [i_1, i_2),(i_1, i_2], $ $ (i_1, i_2), [i_1, \infty], (i_1, \infty)$ where $i_1, i_2 \in \mathbb{T}$ with $i_1 < i_2$. MITL can be interpreted either in continuous or point-wise semantics \cite{pavithra_expressiveness}. The latter approach is utilized, since the consideration of point-wise (event-base) type semantics renders a framework that includes Transition Systems and automata construction, namely our current approach, more natural. The MITL formulas are interpreted over timed runs such as the ones produced by a WTS (Def.~\ref{run_of_WTS}).

\begin{definition} \label{def:mitl_semantics} (\cite{pavithra_expressiveness}, \cite{quaknine_decidability})
	Given a timed word $w^t = (w(0),\tau(0))(w(1),\tau(1)) \dots$ and an MITL formula $\varphi$, we define $(w^t, i) \models \varphi$, for $\ i \geq 0$ (read $w^t$ satisfies $\varphi$ at position $i$) as follows:
	\begin{align*} \label{eq: for1}
		(w^t, i) &\models p \Leftrightarrow p \in w(i) \\
		(w^t, i) &\models \neg \varphi \Leftrightarrow (w^t, i) \not \models \varphi \\
		(w^t, i) &\models \varphi_1 \wedge \varphi_2 \Leftrightarrow (w^t, i) \models \varphi_1 \ \text{and} \ (w^t, i) \models \varphi_2 \\
		(w^t, i) &\models \bigcirc_I \ \varphi \Leftrightarrow (w^t, i+1) \models \varphi \ \text{and} \    \tau(i+1) - \tau(i) \in I\\
		(w^t, i) & \models \Diamond_I \varphi \Leftrightarrow \exists j \ge i, \ \text{s.t. } (w^t, j) \models \varphi, \tau(j)-\tau(i) \in {I} \\
		(w^t, i) & \models \square_I \varphi \Leftrightarrow \forall j \ge i, \ \tau(j)-\tau(i) \in {I} \Rightarrow (w^t, j) \models \varphi  \\
		(w^t, i) &\models \varphi_1 \ \mathcal{U}_I \ \varphi_2 \Leftrightarrow \exists j\ge i, \ \text{s.t. } (w^t, j) \models \varphi_2, \\ &\tau(j)-\tau(i) \in I \ \text{and } (w^t, k) \models \varphi_1 \ \text{for every} \ i \leq k < j.
	\end{align*}
	
\end{definition}

It has been proved that MITL is decidable in both finite and infinite words \cite{alur_mitl} and in both pointwise and continuous semantics \cite{reynold}. The model checking and satisfiability problems are \textit{EXPSPACE}-complete.

\begin{example}
\begin{figure}[ht!]
   	\centering
   	\begin{tikzpicture}[scale = 1.4]       
   	\node(pseudo1) at (-0.8,0){};
   	\node(0) [line width = 1.0] at (0,0)[shape=circle,draw][fill=green!20]          {$s_0$};
   	\node(1) [line width = 1.0] at (2.0,0)[shape=circle,draw][fill=blue!20]         {$s_1$};
   	\node(5) [line width = 1.0] at (4.0,0)[shape=circle,draw][fill=blue!20]         {$s_2$};
   	
   	\path [->] [line width = 1.0]
   	(0)     edge       [bend left = 15]              node  [above]  {$1.0$}  (1)
   	(1)     edge     [bend right = -15]           node  [below]  {$2.0$}     (0)
   	(1)     edge     [bend right = -15]           node  [above]  {$1.5$}     (5)
   	(5)     edge     [bend right = -15]           node  [below]  {$0.5$}     (1)

   	(pseudo1) edge                                       (0);
   	
   	\end{tikzpicture}
   	\caption{An example of a WTS}
   	\label{fig:wtsexample}
   \end{figure}
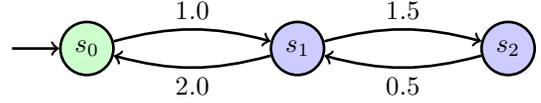
Consider the WTS $\mathcal{T}$ with $S = \{s_0, s_1, s_2\}, S_0 = \{s_0\}, Act = \emptyset,  \longrightarrow = \{(s_0, \emptyset, s_1), \\ (s_1, \emptyset, s_2), (s_1, \emptyset, s_0), (s_2, \emptyset, s_1)\}, d((s_0, \emptyset, s_1)) = 1.0, d((s_1, \emptyset, s_2)) = 1.5, d((s_1, \emptyset, s_0)) = 2.0, \\ d((s_2, \emptyset, s_1)) = 0.5, AP = \{green\}, L(s_0) = \{green\}, L(s_1) = L(s_2) = \emptyset$ depicted in Fig. \ref{fig:wtsexample}.
   
Let two timed runs of the system: $r_1^t = (s_0, 0.0)(s_1, 1.0)(s_0, 3.0)(s_1, 4.0) \ldots, r_2^t = (s_0, 0.0)(s_1, 1.0)(s_2, 2.5)(s_1, 3.0) \ldots$ and two MITL formulas $\varphi_1 = \Diamond_{[2, 5]} \{green\}, \varphi_2 = \square_{[0, 5]} \{green\}$. According to the MITL semantics, it can be seen that the timed run $r_1^t$ satisfies the formula $\varphi_1$ (we formally write $r_1^t \models \varphi_1$), since at the time stamp $3.0 \in [2, 5]$ we have that $L(s_0) = \{green\}$ so the atomic proposition $green$ occurs at least once in the given interval. On the other hand, the timed run $r_2^t$ does not satisfy the formula $\varphi_2$ (we formally write $r_2^t \not \models \varphi_2$) since the atomic proposition $green$ does not always hold at every time stamp of the runs (it holds only at the time stamp $0.0$).
\end{example}

\subsection{Timed B\"uchi Automata} \label{sec: timed_automata}
\textit{Timed B\"uchi Automata (TBA)} were introduced in \cite{alur1994}. In this work, the notation from \cite{bouyer_phd, tripakis_tba} is partially adopted. 
Let $C = \{c_1, \ldots, c_{|C|}\}$ be a finite set of \textit{clocks}. The set of \textit{clock constraints} $\Phi(C)$ is defined by the grammar
\begin{equation}
	\phi :=  \top \mid \ \neg \phi \ | \ \phi_1 \wedge \phi_2 \ | \ c \bowtie \psi \
\end{equation}
where $c \in C$ is a clock, $\psi \in \mathbb{T}$ is a clock constant and $\bowtie \ \in  \{ <, >, \geq, \leq, = \}$. An example of clock constraints for a set of clocks $C = \{c_1, c_2\}$ can be $\Phi(C) = \{ c < c_1 \vee c > c_2\}$. A clock \textit{valuation} is a function $\nu: C \rightarrow\mathbb{T}$ that assigns a value to each clock. A clock $c_i$ has valuation $\nu_i$ for $i \in \{1, \ldots, |C|\}$, and $\nu = (\nu_1, \ldots, \nu_{|C|})$. By $\nu \models \phi$ is denoted the fact that the valuation $\nu$ satisfies the clock constraint $\phi$.


\begin{definition}
	A \textit{Timed B\"uchi Automaton} is a tuple $\mathcal{A} = (Q, Q^{\text{init}}, C, Inv,
	E, F, AP, \mathcal{L})$ where
	$Q$ is a finite set of locations;
	$Q^{\text{init}} \subseteq Q$ is the set of initial locations;
	$C$ is a finite set of clocks;
	$Inv: Q \rightarrow \Phi(C)$ is the invariant; 
	$E \subseteq Q \times \Phi(C) \times 2^C \times Q$ gives the set of edges;
	$F \subseteq Q$ is a set of accepting locations;
	$\AP$ is a finite set of atomic propositions; and
	$\mathcal{L}: Q \rightarrow 2^{AP}$ labels every state with a subset of atomic propositions.
\end{definition}

A state of $\mathcal{A}$ is a pair $(q, \nu)$ where $q \in Q$ and $\nu$ satisfies the \textit{invariant} $Inv(q)$, i.e., $\nu \models Inv(q)$. The initial state of $\mathcal{A}$ is $(q(0), (0,\ldots,0))$, where $q(0) \in Q^{\text{init}}$. Given two states $(q, \nu)$ and $(q', \nu')$ and an edge $e = (q, \gamma, R, q')$, there exists a \textit{discrete transition} $(q, \nu) \overset{e}{\longrightarrow} (q', \nu')$ iff $\nu \models \gamma$, $\nu' \models Inv(q')$, and $R$ is the \textit{reset set}, i.e., $\nu'_i = 0$ for $c_i \in R$ and $\nu'_i = \nu_i$ for $c_i \notin R$. Given a $\delta \in \mathbb{T}$, there exists a \textit{time transition} $(q, \nu) \overset{\delta}{\longrightarrow} (q', \nu')$ iff $q = q', \nu' = \nu+\delta$ ($\delta$ is summed component-wise) and $\nu' \models Inv(q)$. We write $(q, \nu) \overset{\delta}{\longrightarrow} \overset{e}{\longrightarrow} (q', \nu')$ if there exists $q'', \nu''$ such that $(q, \nu) \overset{\delta}{\longrightarrow} (q'', \nu'')$ and $(q'', \nu'') \overset{e}{\longrightarrow} (q', \nu')$ with $q'' = q$.

An infinite run of $\mathcal{A}$ starting at state $(q(0), \nu)$ is an infinite sequence of time and discrete transitions $(q(0), \nu(0))\overset{\delta_0}{\longrightarrow} (q(0)', \nu(0)') \overset{e_0}{\longrightarrow} (q(1), \nu(1)) \overset{\delta_1}{\longrightarrow} (q(1)', \nu(1)') \ldots$, where $(q(0),\nu(0))$ is an initial state. 
This run produces the timed word $w = (\mathcal{L}(q(0)), \tau(0)) (\mathcal{L}(q(1)), \tau(1)) \ldots$ with $\tau(0) = 0$ and $\tau(i+1) = \tau(i) +\delta_i$,  $\forall \ i \geq 1$. The run is called \textit{accepting} if $q(i) \in F$ for infinitely many times. A timed word is \textit{accepted} if there exists an accepting run that produces it. The problem of deciding the emptiness of the language of a given TBA $\mathcal{A}$ is PSPACE-complete \cite{alur1994}. In other words, an accepting run of a given TBA $\mathcal{A}$ can be synthesized, if one exists.

Any MITL formula $\varphi$ over $AP$ can be algorithmically translated to a TBA with the alphabet $2^\AP$, such that the language of timed words that satisfy $\varphi$ is the language of timed words produced by the TBA (\cite{alur_mitl, maler_MITL_TA, nickovic_timed_aut}).

\begin{example} \label{ex: example_1}
	The TBA $\mathcal{A}$ with $Q = \{q_0, q_1, q_2\}, Q^{\text{init}} = \{q_0\}, C = \{c\}, Inv(q_0) = Inv(q_1) = Inv(q_2) = \emptyset, E = \{(q_0, \{c \leq c_2\}, \emptyset, q_0), (q_0, \{c \leq c_1 \vee c > c_2\}, c, q_2), (q_0, \{c \geq c_1 \wedge c \leq c_2\}, c, q_1), (q_1, \top, c, q_1), (q_2, \top, c, q_2)\}, F = \{q_1\}, AP = \{green\}, \mathcal{L}(q_0) = \mathcal{L}(q_2) = \emptyset, \mathcal{L}(q_1) = \{green\} $ that accepts all the timed words that satisfy the formula $\varphi = \Diamond_{[c_1, c_2]} \{green\}$ is depicted in Fig. \ref{fig:TBA_example}. This formula will be used as reference for the following examples and simulations.

	\begin{figure}[ht!]
		\centering
		\begin{tikzpicture}[scale = 1.0]
		\node(pseudo) at (-1.5,0){};
		\node(0) [line width = 1.0] at (0,0)[shape=circle,draw][fill=orange!20] {$\ q_0 \ $};
		\node(1) [line width = 1.0] at (4.5,0)[shape=circle,draw, double][fill=green!20] {$\ q_1 \ $};
		\node(2) [line width = 1.0] at (0,-2)[shape=circle,draw][fill=orange!20] {$\ q_2 \ $};

		\path [->] [line width = 1.0]
		(0)      edge [bend left = 0]         node [above]  {}     (1)
		(0)      edge [loop above]    node [above]  {}     ()
		(1)      edge [loop above]    node [above]  {$\top, c := 0$}     ()
		(2)      edge [loop below]    node [below]  {$\top, c := 0$}     ()
		(0)      edge [bend left = 0]         node [above]  {}     (2)

		(pseudo) edge                                       (0);
		
		\node at (-0.2, 1.6) {$c \leq c_2, \emptyset$};
		\node at (2.2, 0.3) {$c \geq c_1 \wedge c \leq c_2$};
		\node at (2.3, -0.3) {$\tiny c := 0$};
		\node at (-2.0, -0.7) {$\tiny c < c_1 \vee c > c_2$};
		\node at (-1.9, -1.1) {$\tiny c := 0$};
		
		\node at (4.5,-0.8) {$\{green\}$};		
		\end{tikzpicture}
		\caption{A TBA $\mathcal{A}$ that accepts the runs that satisfy formula $\varphi = \Diamond_{[c_1, c_2]} \{green\}$.}
		\label{fig:TBA_example}
	\end{figure}
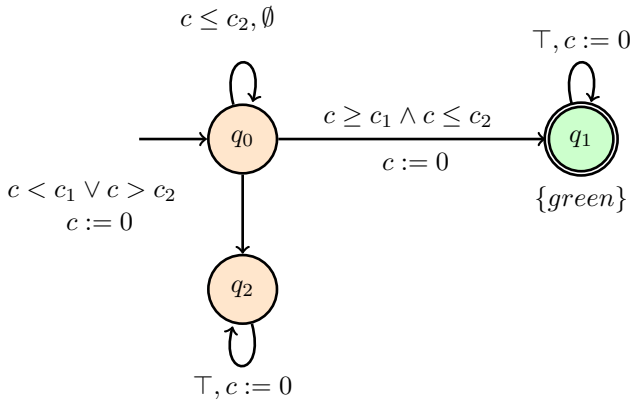
	
	An example of a timed run of this TBA is $(q_0, 0) \overset{\delta = \alpha_1}{\longrightarrow} (q_0, \alpha_1) \overset{e = (q_0, \{c \geq c_1 \wedge c \leq c_2\}, c, q_1)}{\longrightarrow} (q_1, 0) \ldots$ with $c_1 \leq \alpha_1 \leq c_2 $, which generates the timed word $w^t = (\mathcal{L}(q_0), 0)(\mathcal{L}(q_0), \alpha_1)(\mathcal{L}(q_1), \alpha_1) \ldots = (\emptyset, 0)(\emptyset, \alpha_1)(\{green\}, \alpha_1) \ldots$ that satisfies the formula $\varphi$. The timed run $(q_0, 0) \overset{\delta = \alpha_2}{\longrightarrow} (q_0, \alpha_2) \\ \overset{e = (q_0, \{c \leq c_1 \vee c > c_2\}, c, q_2)}{\longrightarrow} (q_2, 0) \ldots$ with $\alpha_2 < c_1$, generates the timed word $w^t = (\mathcal{L}(q_0), 0)(\mathcal{L}(q_0), \alpha_2) \\ (\mathcal{L}(q_2), \alpha_2) \ldots = (\emptyset, 0)(\emptyset, \alpha_2)(\emptyset, \alpha_2) \ldots$ that does not satisfy the formula $\varphi$.
\end{example}

\section{Problem Formulation} \label{sec: prob_formulation}

\subsection{System Model}
We focus on multi-agent systems with coupled dynamics of the form
\begin{equation} \label{eq: system}
	\dot{x}_i = -\sum_{j \in \mathcal{N}(i)}^{} (x_i - x_j)+v_{i}, \ x_i \in \mathbb{R}^n, \  i \in \mathcal{I}.
\end{equation}
where $v_i \in \mathbb{R}^n, \ i \in \mathcal{I}$. The dynamics \eqref{eq: system} consists of two parts; the first part is a consensus protocol representing the coupling between the agent and its neighbors, and the second one is a control input which will be exploited for high-level planning and is called free input. In this work, it is assumed that the free inputs are bounded by a positive constant $v_{\text{max}}$. Namely, $\| v_i(t) \| \leq v_{\text{max}}, \ \forall \ i \in \mathcal{I}, t \geq 0$.

\begin{assumption}
We assume that the communication graph $\mathcal{G} = (\mathcal{I},\mathcal{E})$ of the system is undirected and static i.e., every agent preserves the same neighbors for all times.
\end{assumption}

Notice that the system \eqref{eq: system} can be also expressed in the form $c(\dot{x}, k) = -L(\mathcal{G}) \ c(x, k)+c(v, k), k \in  \{1,\ldots,n\}$ where $x, v \in \mathbb{R}^{Nn}$ are obtained by invoking the definition of the component operator from Sec. \ref{sec:prelims:system}.

\subsection{Specification}
Our goal is to control the multi-agent system \eqref{eq: system} so that each agent obeys a given individual specification. In particular, it is required to drive each agent to a sequence of desired subsets of the \textit{workspace} $\mathbb R^n$ within certain time limits and provide certain atomic tasks there. Atomic tasks are captured through a finite set of \textit{services} $\Sigma_i, i \in \mathcal{I}$. Hence, it is desired to relate the position $x_i$ of each agent $i \in \mathcal{I}$ in the workspace with the services that are offered at $x_i$. Initially, a labeling function 
\begin{equation} \label{eq:label_lambda}
\Lambda_i:\mathbb{R}^n\to 2^{\Sigma_i}
\end{equation}
is introduced for each agent $i \in \mathcal{I}$ which maps each state $x_i \in \mathbb{R}^n$ to the subset of services $\Lambda_i(x_i)$ which hold true at $x_i$ i.e., the subset of services that agent $i$ \textit{can} provide in position $x_i$. It should be noted that although the term labeling function it is used, these functions should not be confused with the labeling functions of a WTS as in Definition \ref{def: WTS}. The union of all the labeling functions as $\Lambda(x) = \bigcup_{i \in \mathcal I} \Lambda_i(x)$ is also defined. 

Without loss of generality, we assume that $\Sigma_i \cap \Sigma_j = \emptyset$, for all $i,j \in \mathcal{I}, i \neq j$ which means that the agents do not share any services. Let us now introduce the following assumption which is important for defining the problem properly.
\begin{assumption}  \label{assumption: AP_cell_decomposition}
There exists a partition $S = \{S_\ell\}_{\ell \in \mathbb I}$ of the workspace which forms a cell decomposition according to Definition \ref{def: cell_decomposition} and respects the labeling function $\Lambda$ i.e., for all $S_\ell \in S$ it holds that $\Lambda(x) = \Lambda(x'), \forall \ x, x' \in S_\ell$.
This assumption intuitively and again without loss of generality, means that the same services hold at all the points that belong to the same cell of the partition.
\end{assumption}
Define now for each agent $i \in \mathcal{I}$ a labeling function 
\begin{equation} \label{eq:label_lambda2}
\mathcal{L}_i: S \to 2^{\Sigma_i}
\end{equation}
which denotes the fact that when agent $i$ visits a region $S_\ell \in S$, it chooses to \textit{provide} a subset of services that are being offered there i.e., it chooses to satisfy a subset of $\mathcal{L}_i(S_\ell)$.

The trajectory of each agent $i$ is denoted by $x_i(t), t \geq 0, i \in \mathcal{I}$. The trajectory $x_i(t), i \in \mathcal{I}$ is associated with a unique sequence $r_{x_i}^t = (r_i(0), \tau_i(0))(r_i(1), \tau_1(1))(r_i(2), \tau_i(2))\ldots$ of regions that the agent $i$ crosses, where for all $j \ge 0$, $r_i(j) \in S_\ell$ for some $\ell \in \mathbb{I}$, $\Lambda_i(x_i(t)) = \mathcal{L}_i(r_i(j)), \forall \ t \in [\tau_i(j), \tau_i(j+1))$ and $r_i(j) \ne r_i(j+1)$. The equality $\Lambda_i(\cdot) = \mathcal{L}_i(\cdot), i \in \mathcal{I}$ is feasible due to assumption 2. The timed word $w_{x_i}^t = (w_i(0), \tau_i(0))(w_i(1), \tau_i(1))(w_i(2), \tau_i(2))\ldots$, where $w_i(j) = \mathcal{L}_i(r_i(j)), j \geq 0, i \in \mathcal{I}$, is associated uniquely with the trajectory $x_i(t)$ and represents the sequence of services that \textit{can be provided} by the agent $i$ following the trajectory $x_i(t), t \ge 0$. 

We define the \textit{timed service word} as
\begin{equation} \label{eq:time_service_word2}
\tilde{w}_{x_i}^t = (\beta_i(z_0), \tilde{\tau}_i(z_0))(\beta_i(z_1), \tilde{\tau}_i(z_1))(\beta_i(z_2), \tilde{\tau}_i(z_2))\ldots 
\end{equation} 
where $z_0 = 0 < z_1 < z_2 < \ldots$ is a sequence of integers, and for all $j \ge 0$ it holds that $\beta_i(z_j) \subseteq \mathcal{L}_i(r_i(z_j))$ and $\tilde{\tau}(z_j) \in [\tau_i(z_j), \tau_i(z_j+1))$. The timed service word is a sequence of services that are \textit{actually provided} by agent $i$ and it is compliant with the trajectory $x_i(t), t \ge 0$.

The specification task $\varphi_i$ given in MITL formulas over the set of services $\Sigma_i$ as in Definition \ref{def:mitl_semantics}, captures requirements on the services to be provided by agent $i$ for each $i \in \mathcal{I}$. We say that a trajectory $x_i(t)$ satisfies a given formula $\varphi_i$ in MITL over the set of atomic propositions $\Sigma_i$ if and only if there exits a \textit{timed service word}, as defined in \eqref{eq:time_service_word2}, that complies with $x_i(t)$ and satisfies $\varphi_i$ according to Definition \ref{def:mitl_semantics}.

\begin{example} \label{ex: example_01}
	We consider here an example in order to understand the notation and the technical terms that have been introduced until here. Let $N = 2$ agents performing in the partitioned environment of Fig. \ref{fig: example_01}. The agents have the ability to pick up, deliver and throw a different ball each. Let the services of each agent be $\Sigma_1 = \{\rm pickUp1, deliver1, throw1\}$ and $\Sigma_2 = \{\rm pickUp2, deliver2, throw2\}$. Note that $\Sigma_1 \cap \Sigma_2 \neq \emptyset$. We capture $3$ points of the trajectories of the agents that belong to different cells where different atomic propositions hold. Let $t_0 = t_0' = 0$ and $t_1 < t_1' < t_2 < t_2 < t_2' < t_3 < t_3'.$ The trajectories $x_1(t), x_2(t), t \ge 0$ are depicted with red lines. According to Assumption \ref{assumption: AP_cell_decomposition} we have that $S = \{S_\ell\}_{\ell \in \mathbb I} = \{S_1, \ldots, S_6\}$ where $\mathbb I = \{1, \ldots, 6\}$. We also have 
	\begin{align}
	&\Lambda_1(x_1(t)) = L_1(r_1(0)) = \{\rm pickUp1\}, t \in [0, t_1), \notag \\
	&\Lambda_1(x_1(t)) = L_1(r_1(1)) = \{\rm throw1\}, t \in [t_1, t_2), \notag \\ 
	&\Lambda_1(x_1(t)) = L_1(r_1(2)) = \{\rm deliver1\}, t \in [t_2, t_3), \notag \\
	&\Lambda_1(x_1(t)) = L_1(r_1(3)) = \emptyset, t \ge t_3. \notag \\
	&\Lambda_2(x_2(t)) = L_2(r_2(0)) = \{\rm pickUp2\}, t \in [0, t_1'), \notag \\
	&\Lambda_2(x_2(t)) = L_2(r_2(1))  = \{\rm deliver2\}, t \in [t_1', t_2'), \notag \\ 
	&\Lambda_2(x_2(t)) = L_2(r_2(2)) = \{\rm throw2\}, t \in [t_2', t_3'), \notag \\
	&\Lambda_2(x_2(t)) = L_2(r_2(3))  = \emptyset, t \ge t_3'. \notag
	\end{align}
	where generally $t_0 = t_0', t_1 \neq t_1'$ and $t_2 \neq t_2'$, $t_3 \neq t_3'$. By computing $w_i(j) = \mathcal{L}(r_i(j)), \forall \ i \in \{1,2\}, j \in \{1,2,3\}$ the corresponding individual timed words are given as:
	\begin{align}
	w^t_{x_1} &= (\{\rm pickUp1\}, t_0)(\{\rm throw1, t_1\})(\{\rm deliver1\}, t_2)(\emptyset, t_3) \notag \\
	w^t_{x_2} &= (\{\rm pickUp2\}, t_0')(\{\rm deliver2\}, t_1')(\{\rm throw2\}, t_2')(\emptyset, t_3'). \notag
	\end{align}
	According to \eqref{eq:time_service_word2}, two time service words (depicted with in Fig. \ref{fig: example_01}) are given as:
	\begin{align*}
	\tilde{w}_1^t &= (\beta_1(z_0), \tilde{\tau}_1(z_0))(\beta_1(z_1), \tilde{\tau}_1(z_1)) \notag \\ 
	\tilde{w}_2^t &= (\beta_2(z'_0), \tilde{\tau}_2(z'_0))(\beta_2(z'_1), \tilde{\tau}_2(z'_1)) \notag \\ 
	\end{align*}
	where for agent 1 we have: $z_0 = 0, z_1 = 2, \beta_1(z_0) = \{\rm pickUp1\} \subseteq \mathcal{L}_1(r_1(z_0)), \beta_1(z_1) = \{\rm deliver1\} \subseteq \mathcal{L}_1(r_1(z_1))$ and $\tilde{\tau}_1(z_0) = \tilde{t}_0' \in [\tau_1(z_0), \tau_1(z_0+1)) = [t_0, t_1), \tilde{\tau}_1(z_1) = \tilde{t}_1' \in [\tau_1(z_1), \tau_1(z_1+1)) = [t_2, t_3)$. The corresponding elements for agent 2 are $z'_0 = 0, z'_1 = 2, \beta_2(z'_0) = \{\rm pickUp2\} \subseteq \mathcal{L}_2(r_2(z'_0)), \beta_2(z'_1) = \{\rm deliver2\} \subseteq \mathcal{L}_2(r_2(z'_1))$ and $\tilde{\tau}_2(z'_0) = \tilde{t}_0'' \in [\tau_2(z'_0), \tau_2(z'_1)) = [t_0', t_1'), \tilde{\tau}_2(z_1') = \tilde{t}_1'' \in [\tau_1(z_1'), \tau_1(z_1'+1)) = [t_2', t_3')$.
	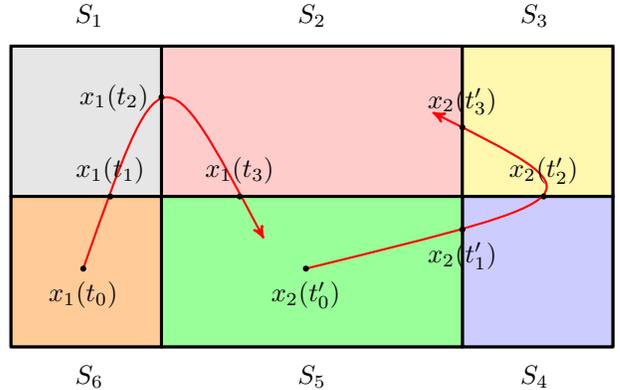
\begin{figure}[ht!]
		\centering
		\begin{tikzpicture}[scale = 0.8]
		
		\draw[step=2.5, line width=.04cm] (-2.5, -5.0) grid (0,0);
		\draw[line width=.04cm] (-7.5,0.0) -- (-2.5,0.0);
		\draw[line width=.04cm] (-7.5,-2.5) -- (-2.5,-2.5);
		\draw[line width=.04cm] (-7.5,-5.0) -- (-2.5,-5.0);
		\draw[step=2.5, line width=.04cm] (-10.0, -5.0) grid (-7.5,0);
		
		\filldraw[fill=black!10, line width=.04cm] (-10, -2.5) rectangle +(2.5, 2.5);
		\filldraw[fill=orange!40, line width=.04cm] (-10, -5.0) rectangle +(2.5, 2.5);
		\filldraw[fill=red!20, line width=.04cm] (-7.5, -2.5) rectangle +(5.0, 2.5);
		\filldraw[fill=yellow!40, line width=.04cm] (-2.5, -2.5) rectangle +(2.5, 2.5);
		\filldraw[fill=blue!20, line width=.04cm] (-2.5, -5.0) rectangle +(2.5, 2.5);
		\filldraw[fill=green!40, line width=.04cm] (-7.5, -5.0) rectangle +(5.0, 2.5);
		
		\draw [color=red,thick,->,>=stealth'](-8.8, -3.7) .. controls (-7.50, -0.0) .. (-5.8, -3.2);
		
		\draw [color=red,thick,->,>=stealth'](-5.1, -3.7) .. controls (-0.2, -2.5) .. (-3.0, -1.1);
		
		\draw (-8.8, -3.7) node[circle, inner sep=0.8pt, fill=black, label={below:{$x_1(t_0)$}}] (A1) {};
		\draw (-8.35, -2.5) node[circle, inner sep=0.8pt, fill=black, label={above:{$x_1(t_1)$}}] (B1) {};
		\draw (-7.50, -0.85) node[circle, inner sep=0.8pt, fill=black, label={left:{$x_1(t_2)$}}] (C1) {};
		\draw (-6.2, -2.5) node[circle, inner sep=0.8pt, fill=black, label={above:{$x_1(t_3)$}}] (D1) {};
		
		\draw (-1.15, -2.5) node[circle, inner sep=0.8pt, fill=black, label={above:{$x_2(t_2')$}}] (A2) {};
		\draw (-2.5, -3.05) node[circle, inner sep=0.8pt, fill=black, label={below:{$x_2(t_1')$}}] (B2) {};
		\draw (-5.1, -3.7) node[circle, inner sep=0.8pt, fill=black, label={below:{$x_2(t_0')$}}] (C2) {};
		\draw (-2.5, -1.35) node[circle, inner sep=0.8pt, fill=black, label={above:{$x_2(t_3')$}}] (D2) {};
		
		\node at (-8.7, 0.5) {$S_1$};
		\node at (-5.0, 0.5) {$S_2$};
		\node at (-1.3, 0.5) {$S_3$};
		\node at (-1.3, -5.5) {$S_4$};
		\node at (-5.0, -5.5) {$S_5$};
		\node at (-8.7, -5.5) {$S_6$};
		
		
		\end{tikzpicture}
		
		\caption{An example of two agents performing in a partitioned workspace.}
		\label{fig: example_01}
	\end{figure}
\end{example}

\subsection{Problem Statement}

We are now ready to define our problem formally as follows:

\begin{problem} \label{problem: basic_prob}
Given $N$ agents that are governed by dynamics as in \eqref{eq: system} and $N$ task specification formulas $\varphi_1, \ldots, \varphi_N$ expressed in MITL over the sets of services $\Sigma_1, \ldots, \Sigma_{{N}}$, respectively, the partition $S = \{S_\ell\}_{\ell \in \mathbb I}$ as in Assumption \ref{assumption: AP_cell_decomposition} and the labeling functions $\Lambda_1, \ldots, \Lambda_N$ and $\mathcal{L}_1, \ldots, \mathcal{L}_N$, as in \eqref{eq:label_lambda}, \eqref{eq:label_lambda2} respectively, assign control laws to the free inputs $v_1, \ldots, v_N$ such that each agent fulfills its individual specification, given the bound $v_{\text{max}}$.
\end{problem}

\begin{remark}
In our preliminary work on the multi-agent controller synthesis framework under MITL specifications \cite{alex_2016_acc}, the multi-agent system was considered to have fully-actuated dynamics. The only constraints on the system were due to the presence of time constrained MITL formulas. In the current framework, we have two types of constraints. Primarily, due to the coupled dynamics of the system, which constrain the motion of each agent, and, secondly, the timed constraints that are inherently imposed from the time bounds of the MITL formulas. Thus, there exist formulas that cannot be satisfied either due to the coupling constraints or the time constraints of the MITL formulas. These constraints, make the procedure of the controller synthesis in the discrete level substantially different and more elaborate than the corresponding multi-agent LTL frameworks in the literature (\cite{guo_2015_reconfiguration, frazzoli_vehicle_routing, belta_regular2, belta_cdc_reduced_communication}).
\end{remark}

\begin{remark}
It should be noted here that, in this work, the couplings and the dependencies between the agents are treated by the dynamics of the form \eqref{eq: system} and not in the discrete level by coupling also in the services of each agent (i.e., $\Sigma_i \cap \Sigma_j \ne \emptyset, \ \forall i, j \in \mathcal I$). Hence, even though the agents do not share atomic propositions, it is the constraint to their motion due to the dynamic couplings with the neighbors that restrict them to fulfill the desired high-level tasks. Treating the coupling through individual atomic propositions in the discrete level as well, constitutes another problem which is far from trivial and a topic of current work.
\end{remark}

\begin{remark}
The motivation for introducing the cell decomposition $S = \{S_\ell\}_{\ell \in \mathbb I}$ in this Section, is that it is required to know which services hold in each part of the workspace. As will be witnessed through the problem solution, this is necessary since the abstraction of the workspace may not be compliant with the initial given cell decomposition, so new partitions and cell decompositions might be required.
\end{remark}

\section{Proposed Solution} \label{sec: solution}

In this section, a systematic solution to Problem~\ref{problem: basic_prob} is introduced. Our overall approach builds on abstracting the system in \eqref{eq: system} into a set of WTSs and further the fact that the timed runs in the $i$-th WTS project onto the trajectories of agent $i$ while preserving the satisfaction of the individual MITL formulas $\varphi_i, i \in \mathcal{I}$. We take the following steps:
\begin{enumerate}
	\item Initially, the boundedness of the agents' relative positions is proved, in order to guarantee boundedness of the coupling terms $-\sum_{j \in \mathcal{N}(i)}^{} (x_i - x_j)$. This property, is required for the derivation of the symbolic models. (Sec. \ref{sec: boundedeness}).
	\item We utilize decentralized abstraction techniques for the multi-agent system, i.e., discretization of both the workspace and the time such that the motion of each agent is modeled by a WTS $\mathcal{T}_i, \ i \in \mathcal{I}$ (Sec. \ref{sec: abstration}).
	\item In view of the definition of WTS, the run of each agent is defined such as to be consistent in view of the coupling constraints with the neighbors. The computation of the product of each individual WTS is thus also required (Sec. \ref{sec: runs consistency}).
	\item A five-step automated procedure for controller synthesis which serves as a solution to Problem \ref{problem: basic_prob} is provided in Sec. \ref{sec: synthesis}.
	\item Finally, the computational complexity of the proposed approach is discussed in Sec. \ref{sec:complexity}.
\end{enumerate}
The next sections provide the proposed solution in detail.

\subsection{Boundedness Analysis} \label{sec: boundedeness}

\begin{theorem} \label{theorem: theorem_1}
	Consider the multi-agent system \eqref{eq: system}. Assume that the network graph is connected (i.e. $\lambda_2(\mathcal{G}) > 0$) and let $v_i, i \in \mathcal{I}$ satisfy $\| v_i(t) \| \leq v_{\text{max}}, \ \forall \ i \in \mathcal{I}, t \geq 0$. Furthermore, let a positive constant $\bar{R} > K_2 v_{\text{max}}$ where $K_2 = \frac{2 \sqrt{N} (N-1) \left\| D(\mathcal{G})^{\tau} \right\|}{\lambda_2^2(\mathcal{G})} > 0$ and where $D(\mathcal{G})$ is the network adjacency matrix. Then, for each initial condition $x_i(0) \in \mathbb{R}^n$, there exists a time $T > 0$ such that $\tilde{x}(t) \in \mathcal{X}, \ \forall t \geq T$, where $\mathcal{X} = \{x \in \mathbb{R}^{Nn} : \|\tilde{x}\| \leq \bar{R}\}$ and $\tilde{x}(t)$ was was defined in \eqref{eq:xtilde}.
\end{theorem}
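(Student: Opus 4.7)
My plan is to reduce the problem to an input-to-state stability analysis on the relative-position variable $\tilde x$. First, I would rewrite the closed-loop dynamics in edge coordinates. Using $L(\mathcal{G}) = D(\mathcal{G}) D(\mathcal{G})^{\tau}$ and the componentwise identity $c(\dot x,k) = -L(\mathcal{G})\, c(x,k) + c(v,k)$, multiplying on the left by $D(\mathcal{G})^{\tau}$ and exploiting \eqref{eq:xtilde} yields, for each coordinate $k\in\{1,\dots,n\}$,
\begin{equation*}
c(\dot{\tilde x},k) \;=\; -E(\mathcal{G})\, c(\tilde x,k) \;+\; D(\mathcal{G})^{\tau}\, c(v,k),
\end{equation*}
where $E(\mathcal{G}) := D(\mathcal{G})^{\tau} D(\mathcal{G})$ is the so-called edge Laplacian. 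Because $c(\tilde x(0),k) = D(\mathcal{G})^{\tau} c(x(0),k) \in \mathrm{im}(D(\mathcal{G})^{\tau})$ and the right-hand side preserves this subspace, the trajectory $c(\tilde x(\cdot),k)$ remains in $\mathrm{im}(D(\mathcal{G})^{\tau})$ for all times, which is the key structural observation.

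Next, I would take the quadratic Lyapunov candidate
\begin{equation*}
V(\tilde x) \;=\; \tfrac{1}{2}\|\tilde x\|^{2} \;=\; \tfrac{1}{2}\sum_{k=1}^{n}\bigl\|c(\tilde x,k)\bigr\|^{2}.
\end{equation*}
The central algebraic fact I would invoke is that, restricted to $\mathrm{im}(D(\mathcal{G})^{\tau})$, the edge Laplacian $E(\mathcal{G})$ satisfies $y^{\tau} E(\mathcal{G}) y \ge \lambda_{2}(\mathcal{G})\|y\|^{2}$; this follows because the nonzero spectrum of $D^{\tau}D$ coincides with that of $DD^{\tau} = L(\mathcal{G})$, whose smallest nonzero eigenvalue is $\lambda_{2}(\mathcal{G})>0$ by connectedness. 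Combining this coercivity with the Cauchy--Schwarz bound on the cross term and using $\|v_{i}(t)\|\le v_{\max}$ to estimate $\|c(v,k)\|\le\sqrt{N}\,v_{\max}$, I obtain an inequality of the form
\begin{equation*}
\dot V \;\le\; -2\lambda_{2}(\mathcal{G})\, V \;+\; C\sqrt{V},
\end{equation*}
where the constant $C$ collects $\|D(\mathcal{G})^{\tau}\|$, $\sqrt{N}$ and $v_{\max}$.

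Finally, a standard comparison/ISS argument shows that $V$ is strictly decreasing whenever $\|\tilde x\|$ exceeds a threshold proportional to $v_{\max}/\lambda_{2}(\mathcal{G})$, so the sub-level set corresponding to $\|\tilde x\|\le \bar R$ is forward invariant and attracting; existence of the time $T>0$ after which $\tilde x(t)\in\mathcal{X}$ then follows from Grönwall/comparison applied to the scalar ODE $\dot W = -2\lambda_{2} W + C\sqrt{W}$. The quantitative choice $\bar R > K_{2} v_{\max}$ with the specific constant $K_{2}=2\sqrt{N}(N-1)\|D(\mathcal{G})^{\tau}\|/\lambda_{2}^{2}(\mathcal{G})$ just has to be large enough to guarantee $\dot V < 0$ on $\partial\mathcal{X}$ after bounding the inhomogeneous term, possibly passing through an intermediate estimate on the deviation from the consensus manifold $\delta = x - (\mathbf{1}\otimes \bar x)$, which explains the extra $(N-1)/\lambda_{2}(\mathcal{G})$ factor. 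The main obstacle I anticipate is exactly this last step: obtaining the bound with the stated constant $K_{2}$ (rather than a cruder one) likely requires projecting the dynamics onto the orthogonal complement of $\mathrm{span}(\mathbf{1})$, carefully tracking how $\|D(\mathcal{G})^{\tau}\|$ and the spectrum of $L(\mathcal{G})$ enter when transferring the bound on $\delta$ back to $\tilde x = D(\mathcal{G})^{\tau}\delta$, and ensuring invariance of $\mathrm{im}(D(\mathcal{G})^{\tau})$ is used to avoid loss in the coercivity constant.
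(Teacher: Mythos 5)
Your proposal is correct, and it reaches the conclusion by a genuinely different route than the paper. The paper keeps the vertex coordinates: it differentiates $V(x)=\|\tilde{x}\|^{2}$ along \eqref{eq: system}, which produces the decrement $-\sum_{k}\|L(\mathcal{G})c(x,k)\|^{2}$, and then needs an auxiliary lemma (imported from the appendix of \cite{boskos_cdc_connectivity}) giving $\|L(\mathcal{G})c(x,k)\|\ge\lambda_{2}(\mathcal{G})\|c(x^{\perp},k)\|$ and $\|x^{\perp}\|\ge\|\tilde{x}\|/\sqrt{2(N-1)}$ to convert that decrement into $-\tfrac{\lambda_{2}^{2}(\mathcal{G})}{2(N-1)}\|\tilde{x}\|^{2}$; this two-step detour through the consensus-orthogonal projection is precisely where the factors $(N-1)$ and $\lambda_{2}^{2}$ in $K_{2}$ originate. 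You instead pass to edge coordinates, use the invariance of $\mathrm{im}(D(\mathcal{G})^{\tau})$ and the coincidence of the nonzero spectra of $D^{\tau}D$ and $DD^{\tau}=L(\mathcal{G})$ to get coercivity with constant $\lambda_{2}(\mathcal{G})$ in one step, and bound the cross term exactly as the paper does (Cauchy--Schwarz plus $\|v\|\le\sqrt{N}v_{\max}$). Your resulting threshold $\sqrt{N}\,\|D(\mathcal{G})^{\tau}\|\,v_{\max}/\lambda_{2}(\mathcal{G})$ is in fact \emph{tighter} than $K_{2}v_{\max}$, since $\lambda_{2}(\mathcal{G})\le N\le 2(N-1)$ for $N\ge 2$, so the theorem with the stated $\bar{R}>K_{2}v_{\max}$ follows a fortiori; the worry in your last paragraph about having to reproduce the ``extra $(N-1)/\lambda_{2}$ factor'' is therefore unfounded --- that factor is an artifact of the paper's cruder estimate, not something your argument must recover. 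The concluding step (strict negativity of $\dot V$ on the compact annulus between the two level sets, hence finite-time entry and forward invariance of $\mathcal{X}$) is essentially identical in both arguments.
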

\begin{proof}
	Consider the following candidate Lyapunov function $V: \mathbb{R}^{Nn} \rightarrow \mathbb{R}$
	\begin{equation}
		V(x) = \frac{1}{2} \sum_{i=1}^{N} \sum_{j \in \mathcal{N}(i)}^{} \| x_i-x_j\|^2 = \|\tilde{x}\|^2 > 0.
	\end{equation}
	The time derivative of $V$ along the trajectories of \eqref{eq: system}, can be computed as
	\begin{align}
		\dot{V}(x) &= \left[ \nabla V(x) \right]^{\tau} \ \dot{x} \nonumber \\
		&=
		\displaystyle \sum_{k=1}^{n} \left\{\frac{\partial V}{\partial x_1^k} \ \dot{x}_1^k  \right\} + \ldots + \sum_{k=1}^{n} \left\{\frac{\partial V}{\partial x_N^k} \ \dot{x}_N^k  \right\} \nonumber \\
		&=
		\sum_{k=1}^{n} \left\{ c\left(\frac{\partial V}{\partial x} ,k\right)^{\tau} \ c(\dot{x},k) \right\} \notag \\
		&= \sum_{k=1}^{n} \left\{ c\left(\frac{\partial V}{\partial x} ,k\right)^{\tau} \left[ -L(\mathcal{G}) \ c(x,k)+c(v, k) \right] \right\}\label{lyap1}
	\end{align}
	where $c(\frac{\partial V}{\partial x} ,k) = \begin{bmatrix} \frac{\partial V}{\partial x_1^k} & \dots & \frac{\partial V}{\partial x_N^k} \end{bmatrix}^{\tau}$. By computing the partial derivative of the Lyapunov function with respect to vector $x_i, \ i \in \mathcal{I}$ we get $\frac{\partial V}{\partial x_i} = \sum_{j \in \mathcal{N}(i)}^{} (x_i-x_j), \ i \in \mathcal{I}$ from which we have that $c\left(\frac{\partial V}{\partial x} ,k\right)^{\tau} = c(x,k)^{\tau} \ L(\mathcal{G}), \ k = 1,...,n$. Thus, by substituting the last in \eqref{lyap1} we get
	\begin{align}
		\dot{V}(x) &= \sum_{k=1}^{n} \left\{ c(x,k)^{\tau} \ L(\mathcal{G}) \ \left[ -L(\mathcal{G}) \ c(x,k)+c(v, k) \right] \right\} \nonumber \\
		&= -\sum_{k=1}^{n} \left\{ c(x,k)^{\tau} \ \left[ L(\mathcal{G})\right]^2 \ c(x,k) \right\} + \nonumber \\ 
		&\qquad \qquad \qquad \sum_{k=1}^{n} \left\{ c(x,k)^{\tau} \ \left[ L(\mathcal{G})\right]^2 c(v, k) \right\} \nonumber \\
		&\leq -\sum_{k=1}^{n} \left\{ c(x,k)^{\tau} \ \left[ L(\mathcal{G})\right]^2 \ c(x,k)) \right\} + \nonumber \\ 
		&\qquad \qquad \qquad \left\| \sum_{k=1}^{n} \left\{ c(x,k)^{\tau} \ L(\mathcal{G}) \ c(v, k) \right\} \right\|. \label{eq: lyap3}
	\end{align}
	
	\noindent For the first term of \eqref{eq: lyap3} we have that
	\begin{align}
		\sum_{k=1}^{n} \left\{ c(x,k)^{\tau} \ L(\mathcal{G})^2 \ c(x,k)) \right\} = \sum_{k=1}^{n}  \left\| L(\mathcal{G}) \ c(x,k) \right\|^2. \notag
	\end{align}
	
	\noindent For the second term of \eqref{eq: lyap3} we have that
	\begin{align}
		&\left\| \sum_{k=1}^{n} \left\{ c(x,k)^{\tau} \ L(\mathcal{G}) \ c(\nu,k) \right\} \right\|  \notag \\
		&\qquad = \left\| \sum_{k=1}^{n} \left\{ c(x,k)^{\tau} \ D(\mathcal{G}) \ D(\mathcal{G})^{\tau} \ c(\nu,k) \right\} \right\| \notag \\
		&\leq \sum_{k=1}^{n} \left\{ \left\| D(\mathcal{G}) ^{\tau} \ c(x,k) \right\| \ \left\| D(\mathcal{G})^{\tau} \right\| \ \left\| c(\nu,k) \right\|\right\} \notag \\
		&\qquad =  \left\| D(\mathcal{G})^{\tau} \right\| \ \sum_{k=1}^{n} \left\{ \left\| c(\tilde{x},k) \right\| \ \left\| c(\nu,k) \right\| \right\}. \label{eq: lyap4}
	\end{align}
	By using the Cauchy-Schwarz inequality in \eqref{eq: lyap4} we get	
	\begin{align}
		&\left\| \sum_{k=1}^{n} \left\{ c(x,k)^{\tau} \ L(\mathcal{G}) \ c(v, k) \right\} \right\| \nonumber \\ 
		&\qquad \leq \left\| D(\mathcal{G})^{\tau} \right\| \ \left( \sum_{k=1}^{n} \left\| c(\tilde{x},k) \right\|^2 \right)^{\frac{1}{2}} \ \left( \sum_{k=1}^{n} \left\| c(v, k) \right\|^2 \right)^{\frac{1}{2}} \notag \\
		&\qquad = \left\| D(\mathcal{G})^{\tau} \right\| \|\tilde{x}\| \|v\| \leq \left\| D(\mathcal{G})^{\tau} \right\| \|\tilde{x}\| \sqrt{N} \|v\|_\infty \notag
	\end{align}
	
	\noindent where $\|v\|_\infty = \text{max} \left\{\|v_i\| : i = 1, \ldots, N\right\} \leq v_{\text{max}}$. Thus, by combining the previous inequalities, \eqref{eq: lyap3} is written
	\begin{equation}
		\dot{V}(x) \leq -\sum_{k=1}^{n}  \left\{ \left\| L(\mathcal{G}) c(x,k) \right\|^2 \right\} + \sqrt{N} \left\| D(\mathcal{G})^{\tau} \right\| \|\tilde{x}\| v_{\text{max}}. \label{eq: lyap8}
	\end{equation}
	In order to proceed the following Lemma is required.
	\begin{lemma} \label{lemma: lemma_1}
		Let $x^\perp$ be the projection of the vector $x \in \mathbb{R}^{Nn}$ to the orthogonal complement of the subspace $H = \{x \in \mathbb{R}^{Nn}: x_1 = \ldots = x_N\}$. Then, the following hold:
		\begin{align}
			\| L(\mathcal{G}) \ c(x, k) \| &\geq \lambda_2 (\mathcal{G}) \ \|c(x^{\perp}, k) \|, \ \forall \ k \in \mathcal{I} \\
			\|x^\perp\| &\geq \frac{1}{\sqrt{2(N-1)}} \|\tilde{x}\|.
		\end{align}
	\end{lemma}
	\begin{proof}
		See the Appendix of \cite{boskos_cdc_connectivity}.
	\end{proof}
	By exploiting Lemma \ref{lemma: lemma_1}, \eqref{eq: lyap8} is written
	\begin{align}
		\dot{V}(x) &\leq -\lambda_2^2(\mathcal{G}) \sum_{k=1}^{n} \left\{\left\| c(x^{\perp},k) \right\|^2 \right\} + \sqrt{N} \left\| D(\mathcal{G})^{\tau} \right\| \|\tilde{x}\| v_{\text{max}} \notag \\
		&= -\lambda_2^2(\mathcal{G}) \ \|x^\perp\|^2  + \sqrt{N} \left\| D(\mathcal{G})^{\tau} \right\| \|\tilde{x}\| v_{\text{max}} \notag \\
		&\leq -\frac{\lambda_2^2(\mathcal{G})}{2(N-1)}\|\tilde{x}\|^2  + \sqrt{N} \left\| D(\mathcal{G})^{\tau} \right\| \|\tilde{x}\| v_{\text{max}} \notag \\
		&\leq -K_1 \|\tilde{x}\| \left(\|\tilde{x}\|-K_2 v_{\text{max}} \right). \label{eq:concl_1}
	\end{align}	
		where $K_1 = \frac{\lambda_2^2(\mathcal{G})}{2(N-1)} > 0$. By using the following implication $\tilde{x} = D^{\tau}(\mathcal{G})x \Rightarrow \|\tilde{x}\| = \|D(\mathcal{G})^\tau x\| \leq \|D(\mathcal{G})^\tau\| \|x\|$, apparently, we have that $0 < V(x)  = \|\tilde{x}\|^2 \leq \|D(\mathcal{G})^\tau\|^2 \|x\|^2 \ \text{and} \ \dot{V}(x) < 0$ when $\|\tilde{x}\| \geq \bar{R} > K_2 v_{\text{max}} $. Thus, there exists a finite time $T > 0$ such that the trajectory will enter the compact set $\mathcal{X} = \{x \in \mathbb{R}^{Nn} : \|\tilde{x}\| \leq \bar{R}\}$ and remain there for all $t \geq T$ with $\bar{R} > K_2 v_{\text{max}}$. This can be extracted from the following.
		
		Let us define the compact set $\Omega = \left\{ x \in \mathbb{R}^{Nn} : K_2 v_{\text{max}} < \bar{R} \leq \|\tilde{x}\| \leq \bar{M} \right\}$, where $\bar{M} = V(x(0)) = \|\tilde{x}(0)\|^2$. Without loss of generality it is assumed that it holds $\bar{M} > \bar{R}$. Let us define the compact sets:
		\begin{align*}
		\mathcal{S}_{1} &= \left\{ x \in \mathbb{R}^{Nn} : \|\tilde{x}\| \leq \bar{M} \right\}, \\
		\mathcal{S}_{2} &= \left\{ x \in \mathbb{R}^{Nn} : \|\tilde{x}\| \leq K_2 v_{\text{max}} \right\}.
		\end{align*}
		From the equivalences $\forall \ x \in S_1 \Leftrightarrow V(x) = \|\tilde{x}\|^2 \leq \bar{M}^2, \forall \ x \in S_2 \Leftrightarrow V(x) = \|x\|^2 \leq K^2_2 v^2_{\text{max}}$, we have that the boundaries $\partial S_1, \partial S_2$ of sets $S_1, S_2$ respectively, are two level sets of the Lyapunov function $V$. By taking the above into consideration we have that $\partial S_2 \subset \partial S_1$. Hence, we get from \eqref{eq:concl_1} that there exist constant $\gamma > 0$ such that: 
		\begin{equation}
		\dot{V}(x) \leq - \gamma < 0, \forall \ x \in \Omega = S_1 \backslash S_2. \label{eq:concl_2}
		\end{equation}			
		Consequently, the trajectory has to enter the interior of the set of $S_2$ in finite time $T > 0$ and remain there for all time $t \geq T$.
\end{proof}

It should be noticed that the relative boundedness of the agents' positions guarantees a global bound on the coupling terms $-\sum_{j \in \mathcal{N}(i)}^{} (x_i - x_j)$, as defined in \eqref{eq: system}. This bound will be later exploited in order to capture the behavior of the system in $\mathcal{X} = \{x \in \mathbb{R}^{Nn} : \|\tilde{x}\| \leq \bar{R}\}$, by a discrete state WTS.

\subsection{Abstraction} \label{sec: abstration}

In this section we provide the abstraction technique that is adopted in order to capture the dynamics of each agent into Transition Systems. We work completely in discrete level, which is necessary in order to solve the Problem \ref{problem: basic_prob}.

Firstly, some additional notation is introduced. Given an index set $\mathbb{I}$ and an agent $i \in \mathcal{I}$ with neighbors $j_1, \ldots, j_{N_i}$, the mappings $\text{pr}_i : \mathbb{I}^N \rightarrow \mathbb{I}^{N_i+1}, \bar{\text{pr}}_i : \mathbb{I}^N \rightarrow \mathbb{I}$ are defined, where $\mathbb{I}^N = \underbrace{\mathbb{I} \times \ldots \times \mathbb{I}}_{N-products}$. The first one assigns to each $N$-tuple ${\bf{l}}= (l_1, \ldots, l_N) \in \mathbb{I}^N$ the $N_i+1$ tuple ${\bf{l}}_i = (l_i, l_{j_1}, \ldots, l_{j_{N_i}}) \in \mathbb{I}^{N_i+1}$ which denotes the indices of the cells where the agent $i$ and its neighbors belong. The second one assigns to each $N$-tuple ${\bf}{l} = (l_1, \ldots, l_N) \in \mathbb{I}^N$ the {position} $l_i \in \mathbb{I}$ of the agent $i$, i.e., the cell that the agent $i$ occupies at the moment.

Consider a particular configuration $\bar{S} = \{\bar{S}_{l}\}_{l \in \bar{\mathbb{I}}}$, where agent $i$ occupies the cell $\bar{S}_{l_i}$. We denote here with $\bar{S}$ the cell decomposition which is the outcome of the abstraction technique that is adopted for the problem solution that will be presented in this Section. This is not necessarily the same the cell decomposition $S$ from Assumption \ref{assumption: AP_cell_decomposition} and Problem \ref{problem: basic_prob}. Let $\delta t$ be a time step. Through the aforementioned space and time discretization $\bar{S}-\delta t$ we aim to capture the reachability properties of the continuous system \eqref{eq: system}, in order to create a WTS of each agent. The WTS will later on serve in the synthesis of plans that fulfill the high-level specifications and that map onto the desired free inputs $v_i, i \in \mathcal I$. 

We proceed by describing the abstraction procedure. If there exists a free input for each state in $\bar{S}_{l_i}$ that navigates the agent $i$ into the cell $\bar{S}_{l_i'}$ precisely in time $\delta t$, regardless of the locations of the agent $i$'s neighbors within their current cells, then a transition from $l_i$ to $\l_i'$ is enabled in the WTS. This forms the well-possessedness of transitions which will be explained hereafter. A mathematical derivation can be found in \cite{boskos_cdc_2015}.

\begin{figure}[ht!] 
	\begin{center}
		\begin{tikzpicture} [scale=.80]
		
		\draw[color=gray,thick] (0,0) -- (4,0);
		\draw[color=gray,thick] (0,1) -- (4,1);
		\draw[color=gray,thick] (0,2) -- (4,2);
		\draw[color=gray,thick] (0,3) -- (4,3);
		
		\draw[color=gray,thick] (0,0) -- (0,3);
		\draw[color=gray,thick] (1,0) -- (1,3);
		\draw[color=gray,thick] (2,0) -- (2,3);
		\draw[color=gray,thick] (3,0) -- (3,3);
		\draw[color=gray,thick] (4,0) -- (4,3);
		
		\draw[color=blue,very thick] (0,0) -- (1,0) -- (1,1) -- (0,1) -- (0,0);
		\draw[color=green,very thick] (2,1) -- (3,1) -- (3,2) -- (2,2) -- (2,1);
		\draw[color=green,very thick] (3,2) -- (4,2) -- (4,3) -- (3,3) -- (3,2);
		\draw[color=cyan,very thick] (0,2) -- (1,2) -- (1,3) -- (0,3) -- (0,2);
		
		
		\fill[yellow] (0.3,2.5) circle (0.3cm);
		\fill[yellow] (0.4,2.7) circle (0.3cm);
		\fill[yellow] (0.5,2.6) circle (0.3cm);
		
		\draw[black, dashed] (0.3,2.5) circle (0.3cm);
		\draw[black, dashed] (0.4,2.7) circle (0.3cm);
		\draw[black, dashed] (0.5,2.6) circle (0.3cm);

		\draw [color=red,thick,->,>=stealth'](0.2,0.3) .. controls (0.3,1.5) .. (0.3,2.5);
		\fill[black] (0.2,0.3) circle (1.5pt);
		
		\draw [color=red,thick,->,>=stealth'](0.5,0.8) .. controls (0.5,1.5) .. (0.4,2.7);
		\fill[black] (0.5,0.8) circle (1.5pt);
		
		\draw [color=red,thick,->,>=stealth'](0.7,0.6) .. controls (0.7,1.5) .. (0.5,2.6);
		\fill[black] (0.7,0.6) circle (1.5pt);

		\coordinate [label=left:$S_{l_i}$] (A) at (0,0.5);
		\coordinate [label=above left:$x_{i}$] (A) at (0.85,0.15);
		\coordinate [label=right:$S_{l_{j_1}}$] (A) at (3,1.5);
		\fill[black] (2.3,1.4) circle (1.5pt) node[right]{$x_{j_{1}}$};
		\coordinate [label=right:$S_{l_{j_2}}$] (A) at (4,2.5);
		\fill[black] (3.4,2.8) circle (1.5pt) node[below]{$x_{j_{2}}$};
		\coordinate [label=left:$S_{l_i'}$] (A) at (0,2.5);
		
		\draw[color=gray,thick] (5,0) -- (9,0);
		\draw[color=gray,thick] (5,1) -- (9,1);
		\draw[color=gray,thick] (5,2) -- (9,2);
		\draw[color=gray,thick] (5,3) -- (9,3);
		
		\draw[color=gray,thick] (5,0) -- (5,3);
		\draw[color=gray,thick] (6,0) -- (6,3);
		\draw[color=gray,thick] (7,0) -- (7,3);
		\draw[color=gray,thick] (8,0) -- (8,3);
		\draw[color=gray,thick] (9,0) -- (9,3);
		
		\draw[color=blue,very thick] (5,0) -- (6,0) -- (6,1) -- (5,1) -- (5,0);
		\draw[color=green,very thick] (7,1) -- (8,1) -- (8,2) -- (7,2) -- (7,1);
		\draw[color=green,very thick] (8,2) -- (9,2) -- (9,3) -- (8,3) -- (8,2);
		\draw[color=cyan,dashed,very thick] (5,2) -- (6,2) -- (6,3) -- (5,3) -- (5,2);
		\draw[color=cyan,dashed,very thick] (6,2) -- (7,2) -- (7,3) -- (6,3) -- (6,2);
		\draw[color=cyan,dashed,very thick] (6,1) -- (7,1) -- (7,2) -- (6,2) -- (6,1);

		\fill[yellow] (5.3,2.5) circle (0.3cm);
		\fill[yellow] (6.4,2.7) circle (0.3cm);
		\fill[yellow] (6.5,1.7) circle (0.3cm);
		
		\draw[black, dashed] (5.3,2.5) circle (0.3cm);
		\draw[black, dashed] (6.4,2.7) circle (0.3cm);
		\draw[black, dashed] (6.5,1.7) circle (0.3cm);
		
		\draw [color=red,thick,->,>=stealth'](5.2,0.3) .. controls (5.3,1.5) .. (5.3,2.5);
		\fill[black] (5.2,0.3) circle (1.5pt);
		
		\draw [color=red,thick,->,>=stealth'](5.5,0.8) .. controls (5.5,1.5) .. (6.4,2.7);
		\fill[black] (5.5,0.8) circle (1.5pt);
		
		\draw [color=red,thick,->,>=stealth'](5.7,0.6) .. controls (5.7,1) .. (6.5,1.7);
		\fill[black] (5.7,0.6) circle (1.5pt);

		\coordinate [label=below:System (i)] (A) at (2,-0.5);
		\coordinate [label=below:System (ii)] (A) at (7,-0.5);
		
		\coordinate [label=left:$S_{l_i}$] (A) at (5,0.5);
		\coordinate [label=above left:$x_{i}$] (A) at (5.85,0.15);
		\coordinate [label=right:$S_{l_{j_1}}$] (A) at (8,1.5);
		\fill[black] (7.3,1.4) circle (1.5pt) node[right]{$x_{j_{1}}$};
		\coordinate [label=right:$S_{l_{j_2}}$] (A) at (9,2.5);
		\fill[black] (8.4,2.8) circle (1.5pt) node[below]{$x_{j_{2}}$};
		
		\draw[dashed,->,>=stealth] (0.4,2.7) -- (1.5,3.5);
		\coordinate [label=right:$x_{i}(\delta t)$] (A) at (1.5,3.5);
		
		\draw[dashed,->,>=stealth] (5.3,2.5) -- (6.5,3.5);
		\coordinate [label=right:$x_{i}(\delta t)$] (A) at (6.5,3.5);
		
		\end{tikzpicture}
		\caption{Illustration of a space-time discretization which is well posed for system (i) but non-well posed for system (ii).}
		\label{fig: well_possed_abstraction}
	\end{center}
\end{figure}
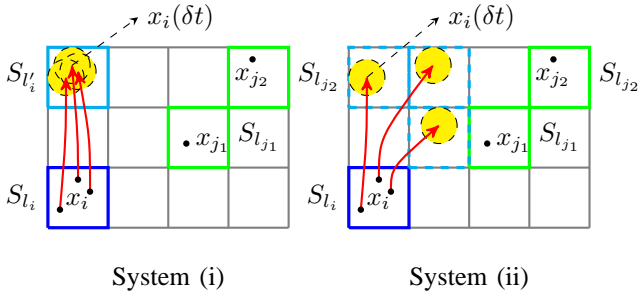

We next illustrate the concept of a well-posed abstraction, namely, a discretization which generates for each agent a Transition System in accordance with the discussion above and the Def. \ref{def: WTS}. Consider a cell decomposition $\bar{S} = \{\bar{S}_{l}\}_{l \in \bar{\mathbb{I}} = \{1, \ldots, 12\}}$ as depicted in Fig. \ref{fig: well_possed_abstraction} and a time step $\delta t$. The tails and the tips of the arrows in the figure depict the initial cell and the endpoints of agent's $i$ trajectories at time $\delta t$ respectively. In both cases in the figure we focus on agent $i$ and consider the same cell configuration for $i$ and its neighbors. However, different dynamics are considered for Cases (i) and (ii). In Case (i), it can be observed that for the three distinct initial positions in cell $\bar{S}_{l_i}$, it is possible to drive agent $i$ to cell $\bar{S}_{l_i'}$ at time $\delta t$. We assume that this is possible for all initial conditions in this cell and irrespectively of the initial conditions of $i$'s neighbors in their cells and the inputs they choose. It is also assumed that this property holds for all possible cell configurations of $i$ and for all the agents of the system. Thus we have a well-posed discretization for system (i). On the other hand, for the same cell configuration and system (ii), the following can be observed. For three distinct initial conditions of $i$ the corresponding reachable sets at $\delta t$, which are enclosed in the dashed circles, lie in different cells. Thus, it is not possible given this cell configuration of $i$ to find a cell in the decomposition which is reachable from every point in the initial cell and we conclude that discretization is not well-posed for system (ii).

We present at this point the sufficient conditions that relate the dynamics of the multi-agent system \eqref{eq: system}, the time step $\delta t$ and the diameter $d_{\text{max}} = \text{sup} \{ \|x - y\| : x,y \in \bar{S}_l, l \in \mathbb{I} \}$ of the cell decomposition $\bar{S}$, and guarantee the existence of the aforementioned well-posed transitions from each cell. Based on \cite{boskos_cdc_2015} (Section III, inequality (3), Section IV, inequalities (28, 29)), the sufficient conditions that the dynamics of a general class of system in the form
	\begin{equation} \label{eq:boskos_dynamics}
	\dot{x}_i = f_i(x_i, \mathbf{x}_j)+v_i, i \in \mathcal{I}
	\end{equation}
	where $\mathbf{x}_j = (x_{j_1}, \ldots, x_{j_{N_i}}) \in \mathbb{R}^{N_i n}$, should fulfill in order to have well-posed abstractions are the following:

\noindent \textbf{(C1)} There exists $M > v_{\text{max}} > 0$ such that $\|f_i(x_i, \mathbf{x}_j) \| \leq M, \ \forall i \in \mathcal{I}, \forall \ x \in \mathbb{R}^{Nn} : \text{pr}_i(x) = (x_i, \mathbf{x}_j) \ \text{and} \  \tilde{x} \in \mathcal{X}$, by applying the projection operator $\text{pr}_i$ for $\mathbb{I} = \mathbb{R}^n$.

\noindent \textbf{(C2)} There exists a Lipschitz constant $L_1 > 0$ such that
\begin{align}
&\| f_i(x_i, \mathbf{x}_j) - f_i(x_i, \mathbf{y}_j) \| \leq L_1 \|(x_i, \mathbf{x}_j) - (x_i, \mathbf{y}_j) \|, \notag \\  
&\qquad \qquad \forall \ i \in \mathcal{I}, x_i, y_i \in \mathbb{R}^n, \mathbf{x}_j, \mathbf{y}_j \in \mathbb{R}^{N_in}.
\end{align}

\noindent \textbf{(C3)} There exists a Lipschitz constant $L_2 > 0$ such that
\begin{align}
&\| f_i(x_i, \mathbf{x}_j) - f_i(y_i, \mathbf{x}_j) \| \leq L_2 \|(x_i, \mathbf{x}_j) - (y_i, \mathbf{x}_j) \|, \notag \\ 
&\qquad \qquad \forall \ i \in \mathcal{I}, x_i, y_i \in \mathbb{R}^n, \mathbf{x}_j, \mathbf{y}_j \in \mathbb{R}^{N_i n}.
\end{align}

\noindent From \eqref{eq: system} and \eqref{eq:boskos_dynamics} we get $f_i(x_i, \mathbf{x}_j) = -\sum_{j \in \mathcal{N}(i)}^{}(x_i-x_j)$. By checking all the conditions one by one for $f_i(x_i, \mathbf{x}_j)$ as in \eqref{eq: system}, we have: \\

\noindent \textbf{(C1)} For every $i \in \mathcal{I}, \forall \ x \in \mathbb{R}^{Nn} : \tilde{x} \in \mathcal{X}$ and $\text{pr}_i(x) = (x_i, \mathbf{x}_j)$ we have that
\begin{align}
\| f_i(x_i, \mathbf{x}_j) \| &= \left\| -\sum_{j \in \mathcal{N}(i)}^{}(x_i-x_j) \right\|  \leq \sum_{j \in \mathcal{N}(i)}^{} \| x_i-x_j \| \notag \\ 
&\leq  \sum_{ (i,j) \in \mathcal{E} }^{} \| x_i-x_j \| = \Delta x \leq \bar{R}.
\end{align}
Thus, $M = \bar{R}$. We have also that $\|D(\mathcal{G})^\tau\| = \sqrt{\lambda_{\text{max}} ( D(\mathcal{G}) D(\mathcal{G}) ^\tau)} = \sqrt{\lambda_{\text{max}}(\mathcal{G})}$ and $\lambda_2(\mathcal{G}) \leq \frac{N}{N-1} \min \{N_i : i \in \mathcal{I} \}$ from \cite{fiedler1973algebraic}.  For $N > 2$ it holds that $\lambda_2(\mathcal{G}) < N$. From Theorem \eqref{theorem: theorem_1} we have that $\bar{R} > K_2 v_{\text{max}} \Leftrightarrow M > K_2 v_{\text{max}}$. It holds that $M > v_{\text{max}}$ since
\begin{align}
K_2 &= \frac{2 \sqrt{N} (N-1) \left\| D(\mathcal{G})^{\tau} \right\|}{\lambda_2^2(\mathcal{G})} = \frac{2 \sqrt{N} (N-1) \sqrt{\lambda_{\text{max}}(\mathcal{G})}}{\sqrt{\lambda_2^3(\mathcal{G})} \sqrt{\lambda_2(\mathcal{G})}} \notag \\
&\geq \frac{2 \sqrt{N} (N-1)}{\sqrt{N^3}} \sqrt{\frac{\lambda_{\text{max}}(\mathcal{G})}{\lambda_2(\mathcal{G})}} \geq \frac{2 \sqrt{N} (N-1)}{\sqrt{N^3}} > 1. \notag
\end{align}

\noindent \textbf{(C2)} We have that
\begin{align}
&\| f_i(x_i, \mathbf{x}_j) - f_i(x_i, \mathbf{y}_j) \| \notag \\ 
&\qquad = \Big\| -\sum_{j \in \mathcal{N}(i)}^{}(x_i-x_j) + \sum_{j \in \mathcal{N}(i)}^{}(x_i-y_j) \Big\| \notag \\
&\qquad \leq \sqrt{N_i} \ \|(x_i, \mathbf{x}_j) - (x_i, \mathbf{y}_j) \| \notag \\
&\qquad \leq  \text{max}\{\sqrt{N_i} : i = 1, \ldots, N\} \ \|(x_i, \mathbf{x}_j) - (x_i, \mathbf{y}_j) \|. \notag
\end{align}
Thus, the condition \textbf{(C2)} holds and the Lipschitz constant is $L_1 = \text{max}\{\sqrt{N_i} : i = 1, \ldots, N\} >0$, where the inequality $\left( \sum_{i = 1}^{\rho} \alpha_i \right)^2 \leq \rho \ \left( \sum_{i=1}^\rho \alpha_i^2 \right)$ is used. \\

\noindent \textbf{(C3)} By using the same methodology with the proof of \textbf{(C2)} we conclude that $L_2 = \text{max} \{N_i : i = 1, \ldots, N\} > 0$.

Based on the sufficient condition for well posed abstractions in \cite{boskos_cdc_2015}, the acceptable values of $d_{\max}$ and $\delta t$ are given as
\begin{align}
d_{\max} & \in \left(0,\frac{(1-\lambda)^2 v_{\max}^{2}}{4ML}\right] \label{dmax} \\
\delta t & \in \left[\frac{(1-\lambda)v_{\max}-\sqrt{(1-\lambda)^2 v_{\max}^{2}-4MLd_{\max}}}{2ML}, \right. \notag \\
&\left. \frac{(1-\lambda)v_{\max}+\sqrt{(1-\lambda)^2 v_{\max}^{2}-4MLd_{\max}}}{2ML}\right] \label{deltat}
\end{align}

\noindent where the parameter $\lambda$ stands for reachability purposes and $L=\max\{3L_{2}+4L_{1}\sqrt{N_i},i\in\mathcal{I}\}$ with the dynamics bound $M$ and the Lipschitz constants $L_1$, $L_2$ as previously deduced.

\begin{remark}
	Notice that when $d_{\max}$ in \eqref{dmax} is chosen sufficiently small, it is also possible due to the lower bound on the acceptable $\delta t$ in \eqref{deltat} to select a correspondingly small value of the sampling time and capture with higher accuracy the properties of the continuous trajectories. However, this will result in a finer discretization and increase the complexity of the symbolic models.
\end{remark}

\begin{remark} \label{remark:d_max_remark}
	Assume that a cell-decomposition of diameter $d_{\text{max}}$ and a time step $\delta t$ which guarantee well-posed transitions, namely, which satisfy \eqref{dmax} and \eqref{deltat}, have been chosen. It is also possible to chose any other cell-decomposition with diameter $\hat{d}_{\text{max}} \leq d_{\text{max}}$ since, by \eqref{deltat}, the range of acceptable $\delta t$ increases.
\end{remark}

We showed that the dynamics of the system \eqref{eq: system} satisfy all the sufficient conditions (C1)-(C3), thus we have a well-posed space-time discretization $\bar{S}-\delta t$. Recall now Assumption \ref{assumption: AP_cell_decomposition}. It remains to establish the compliance of the cell decomposition $S = \{S_\ell\}_{\ell \in \mathbb{I}}$, which is given in the statement of Problem \ref{problem: basic_prob}, with the cell decomposition $\bar{S} = \{\bar{S}_l\}_{l \in \bar{\mathbb{I}}}$, which is the outcome of the abstraction. By the term of compliance, we mean that:
	\begin{equation} \label{eq:cell_decomposition_compliance}
	\bar{S}_l \cap S_\ell \in S \cup \{\emptyset\}, \ \text{for each} \ \bar{S}_l \in \bar{S}, S_\ell \in S \ \text{and} \ l \in \bar{\mathbb{I}}, \ell \in \mathbb{I}.
	\end{equation} 
	In order to address this problem, we define:
	\begin{equation} \label{eq:final_decomposition}
	\hat{S} = \{ \hat{S}_{\hat{l}} \}_{\hat{l} \in \hat{\mathbb{I}}} = \{ \bar{S}_l \cap S_\ell : \ l \in \bar{\mathbb{I}}, \ell \in \mathbb{I} \} \backslash \{ \emptyset \}
	\end{equation}
	which forms a cell decomposition which is compliant with the cell decomposition $S$ from Problem \ref{problem: basic_prob} and serves as the abstraction solution of this problem. This can be deducted as follows: By taking all the combinations of intersections $\bar{S}_l \cap S_\ell, \forall \ l \in \bar{\mathbb{I}}, \forall \ \ell \in \mathbb{I}$ and enumerating them by indexes of the set $\hat{\mathbb{I}}$, the cells $\{ \hat{S}_{\hat{l}} \}_{\hat{l} \in \hat{\mathbb{I}}}$ are constructed for which the following holds: $\forall \ \ell \in \mathbb{I}, \exists \ l \in \bar{\mathbb{I}}$ such that $\hat{S}_{\hat{l}} = \bar{S}_l \cap S_\ell = \emptyset$ and $\text{int}(\hat{S}_{\hat{\ell}}) \cap \text{int}(\hat{S}_{\hat{\ell}'}) \neq \emptyset$ for all $\hat{\ell}' \in \hat{\mathbb{I}} \backslash \{\hat{\ell}\}$. After all the intersections we have $\cup_{l \in \hat{\mathbb{I}}} \hat{S}_{\hat{l}} = \mathcal{X}$. The diameter of the cell decomposition $\hat{S} =  \{ \hat{S}_{\hat{l}} \}_{\hat{l} \in \hat{\mathbb{I}}}$ is defined as $\hat{d}_{\text{max}} = \text{sup} \{ \|x - y\| : x,y \in \hat{S}_{\hat{l}}, \hat{l} \in \hat{\mathbb{I}} \} \leq d_{\text{max}}$. Hence, according to the discussion above, we have a well-posed abstraction. See Example \ref{ex: example_03} for an illustration of these deviations. 

For the solution to Problem \ref{problem: basic_prob}, the WTS which corresponds to the cell configuration $\hat{S}$, the diameter $\hat{d}_{\text{max}}$ and the time step $\delta t$ will be exploited. Thus, the WTS of each agent is defined as follows:
\begin{definition} \label{def: indiv_WTS}
	The motion of each agent $i \in \mathcal{I}$ in the workspace is modeled by a WTS $\mathcal{T}_i = (S_i, S_i^{\text{init}}, Act_i, \longrightarrow_i,$ $ d_i, AP_i, \hat L_i)$ where
	\begin{itemize}
		\item $S_i = \hat{\mathbb{I}}$, the set of states of each agent is the set of indices of the cell decomposition.
		\item $S_0^{\text{init}} \subseteq S_i$, is a set of initial states.
		\item $Act_i = {\hat{\mathbb{I}}}^{N_i+1}$, the set of actions representing where agent $i$ and its neighbors are located.
		\item For a pair $(l_i, {\bf{l}_i}, l'_i)$ we have that $(l_i, {\bf{l}_i}, l'_i) \in \longrightarrow_i$ iff $l_i \overset{\bf{l}_i}{\longrightarrow_i} l_i'$ is well-posed for each $l_i, l'_i \in S_i$ and ${\bf{l}_i} = (l_i, l_{j_1}, \ldots, l_{j_{N_i}}) \in Act_i$.
		\item $d_i: \longrightarrow_i \rightarrow \mathbb{T}$, is a map that assigns a positive weight (duration) to each transition. The duration of each transition is exactly equal to $\delta t > 0$.
		\item $\AP_i = \Sigma_i$, is the set of atomic propositions which are inherent properties of the workspace.
		\item $\hat L_i: S_i \rightarrow 2^{AP_i}$, is the labeling function that maps the every state $s \in S_i$ into the services that can be provided in this state.
	\end{itemize}
\end{definition}

The aforementioned Definition is crucial since from the dynamical system \eqref{eq: system} we created through the abstraction procedure individual WTSs of each agent $i \in \mathcal{I}$ which capture the motion of each agent and let us work completely in discrete level in order to design the controllers that satisfy the Problem \ref{problem: basic_prob}.

Every WTS $\mathcal{T}_i, i \in \mathcal{I}$ generates timed runs and timed words of the form $r_i^t = (r_i(0), \tau_i(0))(r_i(1), \tau_i(1))(r_i(2), \tau_i(2)) \ldots, w_i^t = (L_i(r_i(0)), \tau_i(0))(L_i(r_i(1)), \tau_i(1))(L_i(r_i(2)), \tau_i(2)) \ldots$ respectively, over the set $2^{AP_i}$ according to Def. \ref{run_of_WTS} with $\tau_i(j) = j \cdot \delta t, \forall \ j \ge 0$. It is necessary now to provide the relation between the time words that are generated by the WTSs $\mathcal{T}_i, i \in \mathcal{I}$ with the time service words produced by the trajectories $x_i(t), i \in \mathcal{I}, t \ge 0$.

\begin{remark} \label{lemma:compliant_WTS_runs_with_trajectories}
By construction, each time word produced by the WTS $\mathcal{T}_i$ is a service time word associated with the trajectory $x_i(t)$ of the system \eqref{eq: system}. Hence, if we find a timed word of $\mathcal{T}_i$ satisfying a formula $\varphi_i$ given in MITL, we also found for each agent $i$ a desired timed word of the original system, and hence trajectories $x_i(t)$ that are solution to the Problem \eqref{problem: basic_prob}. (i.e., the produced timed words of $\mathcal{T}_i$ are compliant with the service time words of the trajectories $x_i(t)$.)
\end{remark}

\begin{example} \label{ex: example_03}
	Assume that $S = \{S_\ell\}_{\ell \in \{1,\ldots,6\}}$ as given in Example \ref{ex: example_0} depicted in Fig. \ref{ex: example_03} by red rectangles, is the cell decomposition of Problem \ref{problem: basic_prob}. Let also $\bar{S} = \{\bar{S}_l\}_{l \in \bar{\mathbb{I}} = \{1, \ldots, 6\}}$ depicted in Fig. \ref{fig:example_04} with light blue cells, be a cell decomposition which serves as potential solution of this Problem satisfying all the abstraction properties that have been mentioned in this Section. It can be observed that the two cell decompositions are not compliant according to \eqref{eq:cell_decomposition_compliance}. However, by using \eqref{eq:final_decomposition}, a new cell decomposition $\hat{S} =  \{ \hat{S}_{\hat{l}} \}_{\hat{l} \in \hat{\mathbb{I}} = \{1, \ldots, 15\}}$ (depicted in Fig. \ref{fig:example_04}), that is compliant $S$, can be obtained and forms the final cell decomposition solution. Let also $d_{\text{max}}, \hat{d}_\text{max}$ be the diameters of the cell decompositions $S, \hat{S}$ respectively. It holds that $\hat{d}_\text{max} \leq d_{\text{max}}$ which is in accordance with the Remark \ref{remark:d_max_remark}.
	\begin{figure}[ht!]
		\centering
		\begin{tikzpicture}[scale = 0.65]
		
		\draw[step=2.5, line width=.04cm, color = red, draw opacity = 0.9] (-2.5, -5.0) grid (0,0);
		\draw[line width=.04cm, color = red, draw opacity = 0.9] (-7.5,0.0) -- (-2.5,0.0);
		\draw[line width=.04cm, color = red, draw opacity = 0.9] (-7.5,-2.5) -- (-2.5,-2.5);
		\draw[line width=.04cm, color = red, draw opacity = 0.9] (-7.5,-5.0) -- (-2.5,-5.0);
		\draw[step=2.5, line width=.04cm, color = red, draw opacity = 0.9] (-10.0, -5.0) grid (-7.5,0);
		
		\draw[line width=.04cm, color = blue, draw opacity = 0.3] (0,0) rectangle (-3.5,-3.5);
		\draw[line width=.04cm, color = blue, draw opacity = 0.3] (0,-3.5) rectangle (-3.5,-5.0);
		\draw[line width=.04cm, color = blue, draw opacity = 0.3] (-3.5,-3.5) rectangle (-6.5,-5);
		\draw[line width=.04cm, color = blue, draw opacity = 0.3] (-3.5,0) rectangle (-6.5,-3.5);
		\draw[line width=.04cm, color = blue, draw opacity = 0.3] (-6.5,0) rectangle (-10.0,-3.5);
		\draw[line width=.04cm, color = blue, draw opacity = 0.3] (-6.5,-3.5) rectangle (-10.0,-5.0);
		
		\node[color = red] at (-8.7, -1.2) {$S_1$};
		\node[color = red] at (-5.0, -0.9) {$S_2$};
		\node[color = red] at (-1.3, -1.2) {$S_3$};
		\node[color = red] at (-1.3, -3.8) {$S_4$};
		\node[color = red] at (-5.0, -3.8) {$S_5$};
		\node[color = red] at (-8.7, -3.8) {$S_6$};
		
		\draw [-latex, color = blue, draw opacity = 0.3, thick, shorten >= 1.00cm] (-1.8, -1.8) -- (1.5, -1.8);
		\node[color = blue, draw opacity = 0.3] at (1.0, -1.75) {$\bar{S}_3$};
		
		\draw [-latex, color = blue, draw opacity = 0.3, thick, shorten >= 1.00cm] (-1.8, -4.2) -- (1.5, -5.2);
		\node[color = blue, draw opacity = 0.3] at (1.1, -4.9) {$\bar{S}_4$};
		
		\draw [-latex, color = blue, draw opacity = 0.3, thick, shorten >= 1.00cm] (-4.8, -1.8) -- (-3.6, 1.3);
		\node[color = blue, draw opacity = 0.3] at (-3.6, 0.8) {$\bar{S}_2$};
		
		\draw [-latex, color = blue, draw opacity = 0.3, thick, shorten >= 1.00cm] (-4.8, -4.3) -- (-4.8, -6.5);
		\node[color = blue, draw opacity = 0.3] at (-4.8, -5.9) {$\bar{S}_5$};
		
		\draw [-latex, color = blue, draw opacity = 0.3, thick, shorten >= 1.00cm] (-8.5, -4.3) -- (-8.5, -6.5);
		\node[color = blue, draw opacity = 0.3] at (-8.5, -5.9) {$\bar{S}_5$};
		
		\draw [-latex, color = blue, draw opacity = 0.3, thick, shorten >= 1.00cm] (-8.5, -1.7) -- (-11.5, -1.7);
		\node[color = blue, draw opacity = 0.3] at (-11, -1.7) {$\bar{S}_1$};
		
		\draw[color=black,very thick,<->] (-2.5,0) -- (-7.5,-2.5);
		\coordinate [label=below right:$\bf{d_{\max}}$] (A) at (-6.2,-1.8);
		
		\end{tikzpicture}
		\caption{An example with a given cell decomposition $S = \{S_l\}_{l \in \{1,\ldots,6\}}$ and a non-compliant solution $\bar{S} = \{\bar{S}_l\}_{l \in \bar{\mathbb{I}} = \{1, \ldots, 6\}}$.}
		\label{fig: example_03}
	\end{figure}
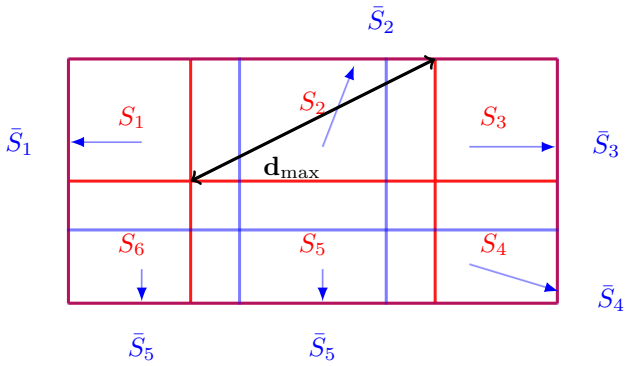
	
	\begin{figure}[ht!]
		\centering
		\begin{tikzpicture}[scale = 0.65]
		
		\draw[step=2.5, line width=.04cm, color = black, draw opacity = 0.9] (-2.5, -5.0) grid (0,0);
		\draw[line width=.04cm, color = black, draw opacity = 0.9] (0,0) rectangle (-3.5,-3.5);
		\draw[line width=.04cm, color = black, draw opacity = 0.9] (-7.5,0.0) -- (-2.5,0.0);
		\draw[line width=.04cm, color = black, draw opacity = 0.9] (-7.5,-2.5) -- (-2.5,-2.5);
		\draw[line width=.04cm, color = black, draw opacity = 0.9] (-7.5,-5.0) -- (-2.5,-5.0);
		\draw[step=2.5, line width=.04cm, color = black, draw opacity = 0.9] (-10.0, -5.0) grid (-7.5,0);
		
		\draw[line width=.04cm, color = black, draw opacity = 0.9] (0,-3.5) rectangle (-3.5,-5.0);
		\draw[line width=.04cm, color = black, draw opacity = 0.9] (-3.5,-3.5) rectangle (-6.5,-5);
		\draw[line width=.04cm, color = black, draw opacity = 0.9] (-3.5,0) rectangle (-6.5,-3.5);
		\draw[line width=.04cm, color = black, draw opacity = 0.9] (-6.5,0) rectangle (-10.0,-3.5);
		\draw[line width=.04cm, color = black, draw opacity = 0.9] (-6.5,-3.5) rectangle (-10.0,-5.0);
		
		\draw[fill = green!5] (0,0) rectangle (-2.5,-2.5);
		\draw[fill = green!5] (0,-2.5) rectangle (-2.5,-3.5);
		\draw[fill = green!5] (0,-3.5) rectangle (-2.5,-3.5);
		\draw[fill = green!5] (0,-3.5) rectangle (-2.5,-5.0);
		\draw[fill = green!5] (-2.5,0) rectangle (-3.5,-2.5);
		\draw[fill = green!5] (-2.5,-2.5) rectangle (-3.5,-3.5);
		\draw[fill = green!5] (-2.5,0) rectangle (-3.5,-2.5);
		\draw[fill = green!5] (-2.5,-3.5) rectangle (-3.5,-5.0);
		\draw[fill = green!5] (-3.5,0.0) rectangle (-6.5,-2.5);
		\draw[fill = green!5] (-6.5,0.0) rectangle (-7.5,-2.5);
		\draw[fill = green!5] (-7.5,0.0) rectangle (-7.5,-2.5);
		\draw[fill = green!5] (-7.5,0.0) rectangle (-10.0,-2.5);
		\draw[fill = green!5] (-3.5,-2.5) rectangle (-6.5,-3.5);
		\draw[fill = green!5] (-6.5,-2.5) rectangle (-7.5,-3.5);
		\draw[fill = green!5] (-7.5,-2.5) rectangle (-10.0,-3.5);
		\draw[fill = green!5] (-3.5,-3.5) rectangle (-6.5,-5.0);
		\draw[fill = green!5] (-6.5,-3.5) rectangle (-7.5,-5.0);
		\draw[fill = green!5] (-7.5,-3.5) rectangle (-10.0,-5.0);
		
		\node[color = black] at (-8.9, -0.8) {$\hat{S}_1$};
		\node[color = black] at (-7.0, -1.2) {$\hat{S}_2$};
		\node[color = black] at (-5.0, -1.2) {$\hat{S}_3$};
		\node[color = black] at (-3.0, -1.2) {$\hat{S}_4$};
		\node[color = black] at (-1.3, -1.2) {$\hat{S}_5$};
		\node[color = black] at (-1.3, -2.95) {$\hat{S}_6$};
		\node[color = black] at (-3.0, -2.95) {$\hat{S}_7$};
		\node[color = black] at (-5.0, -2.95) {$\hat{S}_8$};
		\node[color = black] at (-7.0, -2.95) {$\hat{S}_{9}$};
		\node[color = black] at (-8.7, -2.95) {$\hat{S}_{10}$};
		\node[color = black] at (-8.7, -4.20) {$\hat{S}_{11}$};
		\node[color = black] at (-7.0, -4.20) {$\hat{S}_{12}$};
		\node[color = black] at (-5.0, -4.20) {$\hat{S}_{13}$};
		\node[color = black] at (-3.0, -4.20) {$\hat{S}_{14}$};
		\node[color = black] at (-1.3, -4.20) {$\hat{S}_{15}$};
		
		\draw[color=red,very thick,<->] (-7.5,0) -- (-10.0,-2.5);
		\node[color = red] at (-9.0, -2.10) {$\hat{d}_{\text{max}}$};
		
		\end{tikzpicture}
		\caption{The resulting compliant cell decomposition $\hat{S} =  \{ \hat{S}_{\hat{l}} \}_{\hat{l} \in \hat{\mathbb{I}} = \{1, \ldots, 15\}}$ of Example \ref{ex: example_03}.}
		\label{fig:example_04}
	\end{figure}
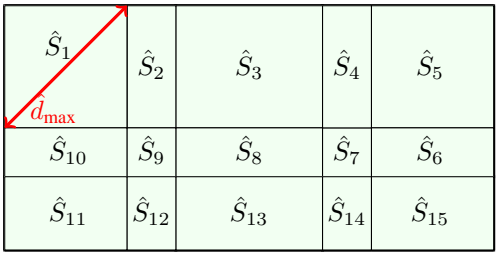
\end{example}

\subsection{Runs Consistency} \label{sec: runs consistency}

Due to the fact that the dynamics of the system have couplings between the agents, it is necessary to define timed runs that can be performed from each individual agent. Even though we have the individual WTS of each agent, the runs that the later generates may not be performed by an agent due to the constrained motion that is imposed by the coupling terms. Hence, we need to provide a tool that synchronizes the agents at each time step $\delta t$ and is able to determine which of the generated runs of the individual WTS can be performed by the agent (i.e., they are consistent runs). In order to address the aforementioned issue, we provide a centralized product WTS which captures the behavior of the coupled multi-agent system as a team, and the generated product run (see Def. \ref{def: consistent_runs}) can later be projected onto consistent individual runs. The following two definitions deal with the product WTS and consistent runs respectively.

\begin{definition} \label{def:product_TS}
	The product WTS $\mathcal{T}_p = (S_p, S_p^{\text{init}}, \longrightarrow_p)$ is defined as follows:
	\begin{itemize}
		\item $S_p = \hat{\mathbb{I}}^N$.
		\item $(s_1, \ldots, s_N) \in S^{\text{init}}$ if $s_i \in S_i^{\text{init}}$ for all $i \in \mathcal{I}$.
		\item $(\bf{l}, \bf{l}') \in \longrightarrow_p$ iff $l_i' \in \text{Post}_i(l_i, \text{pr}_i(\bf{l})), \forall \ i \in \mathcal{I}, \forall \ \bf{l} = (l_1, \ldots, l_N), \bf{l}' = (l'_1, \ldots, l'_N)$.
		\item $d_p: \longrightarrow_p \rightarrow \mathbb{T}$ : As in the individual WTS's case, the transition weight is $d_p(\cdot) = \delta t$.
	\end{itemize}
\end{definition}

The action labels, the atomic propositions and the labeling function in $\mathcal T_p$ are insignificant. Hence, without loss of generality, there were omitted from the tuple.

\begin{definition} \label{def: consistent_runs}
	Given the timed run
	\begin{align}
		&r_p^t = ((r_p^1(0), \ldots , r_p^N(0)),\tau_p(0))((r_p^1(1),\ldots,r_p^N(1)),\tau_p(1))\ldots \notag
	\end{align}
	of the WTS $\mathcal T_p$, the induced set of projected runs $$\{r_i^t = (r_p^i(0), \tau_p(0))(r_p^i(1), \tau_p(1)) \ldots  : i \in \mathcal I\}$$ of the WTSs $\mathcal T_1,\ldots, \mathcal T_N$, respectively will be called \textit{consistent runs}. Since the duration of each agent's transition is $\delta t$ it holds that $\tau_p(j) = j \cdot \delta t, j \geq 0$.
\end{definition}

Therefore, through the product WTS $\mathcal{T}_p$, we can always generate individual consistent runs for each agent. It remains to provide a systematic approach of how to determine consistent runs $\widetilde{r}_1, \ldots, \widetilde{r}_N$  which are associated with the corresponding service time words $\widetilde{w}_1^t, \ldots, \widetilde{w}_N^t$ (note that with tilde we denote the outcome of our solution approach) and, according to the Lemma \ref{lemma:compliant_WTS_runs_with_trajectories}, their corresponding compliant trajectories $x_1(t), \ldots, x_N(t)$ will satisfy the corresponding MITL formulas $\varphi_1, \ldots, \varphi_N$. and they are the solution to Problem \ref{problem: basic_prob}.

\begin{example} \label{ex: example_2}
	
An example that explains the notation that has been introduced until now is the following: Consider an agent (Fig. \ref{fig: example_1}) moving in the workspace with $\mathcal{N}(i) = \{1,2\}$, $S = \{S_\ell\}_{\ell \in \mathbb{I} = \{1, \ldots, 6\}}$ is the given cell decomposition from Problem \ref{problem: basic_prob}, $\bar{S} = \{\bar{S}_{l}\}_{l \in \mathbb{I} = \{1, \ldots, 28\}}$ is the cell decomposition that is the outcome of the abstraction and atomic propositions $\{p_1,\ldots,p_6\} = \{\rm orange, $ $green, blue$ $yellow, red, grey\}$. The red arrows represent both the transitions of the agent $i$ and its neighbors. The dashed lines indicate the edges in the network graph. For the atomic propositions we have that $L_i(14) = \{p_1\}, L_i(17) = \{p_5\},  L_i(10) = \{p_2\}, L_i(20) = \{p_4\}, L_{j_1}(28) = \{p_6\} = L_{j_1}(27), L_{j_1}(24) = \{p_5\}, L_{j_1}(22) = \{p_4\}, L_{j_2}(2) = \{p_1\}, L_{j_2}(12) = \{p_2\} = L_{j_2}(5), L_{j_2}(9) = \{p_3\}$. Note also the diameter of the cells $\hat{d}_{\text{max}} = d_{\text{max}}$. For the cell configurations we have:
	\begin{align}
		&Init:
		\begin{cases}
			{\bf{l}_i} = (14, 28, 2) \\
			{\bf{l}_{j_1}} = (28, 14) \\
			{\bf{l}_{j_2}} = (2, 14) \\
		\end{cases}
		Step 1:
		\begin{cases}
			{\bf{l}_i} = (17, 27, 13) \\
			{\bf{l}_{j_1}} = (27, 17) \\
			{\bf{l}_{j_2}} = (13, 17) \\
		\end{cases} \notag \\
		&Step 2:
		\begin{cases}
			{\bf{l}_i} = (10, 24, 5) \\
			{\bf{l}_{j_1}} = (24, 10) \\
			{\bf{l}_{j_2}} = (5, 10) \\
		\end{cases}
		Step 3:
		\begin{cases}
			{\bf{l}_i} = (20, 22, 9) \\
			{\bf{l}_{j_1}} = (22, 20) \\
			{\bf{l}_{j_2}} = (9, 20) \\
		\end{cases} \notag
	\end{align}
	which are actions to the corresponding transitions. The consistent timed runs are given as
	\begin{align}
		&r_i^t = (r_i(0) = 14, \tau_i(0) = 0) (r_i(1) = 17, \tau_i(1) = \delta t) \notag \\
		&(r_i(2) = 10, \tau_i(2) = 2 \delta t) (r_i(3) = 20, \tau_i(3) = 3 \delta t) \notag \\
		&r_{j_1}^t = (r_{j_1}(0) = 28, \tau_{j_1}(0) = 0) (r_{j_1}(1) = 27, \tau_{j_1}(1) = \delta t) \notag \\
		&(r_{j_1}(2) = 24, \tau_{j_1}(2) = 2 \delta t) (r_{j_1}(3) = 22, \tau_{j_1}(3) = 3\delta t) \notag \\
		&r_{j_2}^t = (r_{j_2}(0) = 2, \tau_{j_2}(0) = 0) (r_{j_2}(1) = 13, \tau_{j_2}(1) = \delta t) \notag \\
		&(r_{j_2}(2) = 5, \tau_{j_2}(2) = 2 \delta t) (r_{j_2}(3) = 9, \tau_{j_2}(3) = 3\delta t). \notag
	\end{align}
	It  can be observed that $r_i^t \models (\varphi_i = \Diamond_{[0, 6]}\{yellow\})$ if $3 \delta t \in [0, 6]$, $r_{j_1}^t \models (\varphi_{j1} = \Diamond_{[3, 10]}\{red\})$ if $2 \delta t \in [3, 10]$ and $r_{j_2}^t \models (\varphi_{j2} = \Diamond_{[3, 9]}\{blue\})$ if $3 \delta t \in [3, 9]$. For $\delta t = 1$, all the agents satisfy their goals.
	
	\begin{figure}[ht!]
		\centering
		\begin{tikzpicture}[scale = 0.80]
		
		\filldraw[fill=yellow!40, line width=.04cm] (0, 0) rectangle +(3, 3.0);
		\filldraw[fill=orange!40, line width=.04cm] (-7.5, -3) rectangle +(3, 3.0);
		\filldraw[fill=red!20, line width=.04cm] (-4.5, 0) rectangle +(4.5, 3.0);
		\filldraw[fill=black!10, line width=.04cm] (-7.5, 0) rectangle +(3, 3.0);
		\filldraw[fill=blue!20, line width=.04cm] (0, -3) rectangle +(3, 3.0);
		\filldraw[fill=green!40, line width=.04cm] (-4.5, -3) rectangle +(4.5, 3);
		
		\draw[step=1.5, line width=.04cm] (-7.5, -3) grid (3,3);
		

		\draw (-6.8,-0.6) node[circle, inner sep=0.8pt, fill=black, label={below:{$i$}}] (A) {};
		\draw (-3.8, 0.7) node[circle, inner sep=0.8pt, fill=black, label={below:{$ $}}] (B) {};
		\draw (-0.8, -0.7) node[circle, inner sep=0.8pt, fill=black, label={below:{$ $}}] (C) {};
		\draw (0.8, 0.7) node[circle, inner sep=0.8pt, fill=black, label={below:{$ $}}] (D) {};
		\draw (-5.3, -2.1) node[circle, inner sep=0.8pt, fill=black, label={below:{$j_2$}}] (E) {};
		\draw (-6.8, 2.3) node[circle, inner sep=0.8pt, fill=black, label={below:{$ $}}] (F) {};
		\draw (-5.0, 2.2) node[circle, inner sep=0.8pt, fill=black, label={below:{$ $}}] (G) {};
		\draw (-0.8, 2.2) node[circle, inner sep=0.8pt, fill=black, label={below:{$ $}}] (H) {};
		\draw (2.2, 2.2) node[circle, inner sep=0.8pt, fill=black, label={below:{$ $}}] (I) {};
		\draw (-3.8, -0.85) node[circle, inner sep=0.8pt, fill=black, label={below:{$ $}}] (K) {};
		\draw (-0.8, -2.30) node[circle, inner sep=0.8pt, fill=black, label={below:{$ $}}] (L) {};
		\draw (0.75, -0.8) node[circle, inner sep=0.8pt, fill=black, label={below:{$ $}}] (M) {};
		
		
		\node at (-6.8, 2.69) {$j_1$};

		
		\draw[->, red, line width=.04cm] (A) to [bend left=35] (B);
		\draw[->, red, line width=.04cm] (B) to [bend left=-35] (C);
		\draw[->, red, line width=.04cm] (C) to [bend left=35] (D);
		\draw[->, red, line width=.04cm] (F) to [bend left=15] (G);
		\draw[->, red, line width=.04cm] (G) to [bend left=15] (H);
		\draw[->, red, line width=.04cm] (H) to [bend right=15] (I);
		\draw[->, red, line width=.04cm] (E) to [bend right=15] (K);
		\draw[->, red, line width=.04cm] (K) to [bend right=15] (L);
		\draw[->, red, line width=.04cm] (L) to [bend right=15] (M);
		
		
		\draw[color = black,very thick,<->] (1.5,-3) -- (3,-1.5);
		\coordinate [label=below right:$\bf{d_{\max}}$] (N) at (2.2,-2);
		
		\node at (1.8, 2.7) {$22$};
		\node at (1.9, -0.3) {$8$};
		\node at (-7.2, 1.20) {$15$};
		\node at (-7.2, -1.9) {$1$};
		
		\node [red!40] at (-2.3, 3.5) {$S_5$};
		\node [yellow!100] at (1.7, 3.4) {$S_4$};
		\node [blue!20] at (1.7, -3.7) {$S_3$};
		\node [green!70] at (-2.3, -3.7) {$S_2$};
		\node [orange!60] at (-6.3, -3.7) {$S_1$};
		\node [black!30] at (-6, 3.5) {$S_6$};
		
		
		\node [red] at (-5.6, 0.95) {$\delta t$};
		\node [red] at (-2.5, -1.15) {$\delta t$};
		\node [red] at (-0.5, 0.5) {$\delta t$};
		
		\draw [dashed, black] (A) -- (E);
		\draw [dashed, black] (A) -- (F);
		\draw [dashed, black] (B) -- (G);
		\draw [dashed, black] (C) -- (H);
		\draw [dashed, black] (D) -- (I);
		\draw [dashed, black] (K) -- (B);
		\draw [dashed, black] (L) -- (C);
		\draw [dashed, black] (M) -- (D);
		\end{tikzpicture}
		\caption{Timed runs of the agents $i, j_1, j_2$}
		\label{fig: example_1}
	\end{figure}
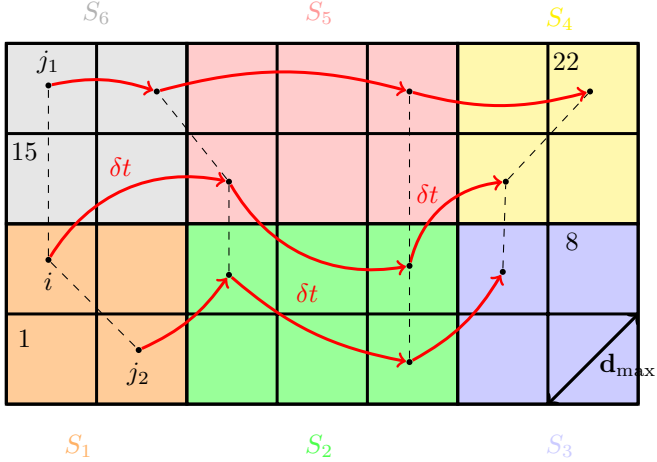
\end{example}

\begin{remark}
We chose to utilize decentralized abstractions, to generate the individual WTSs $\mathcal{I}, i \in \mathcal{I}$ for each agent and to compute the synchronized-centralized product WTS $\mathcal{T}_p$ for the following reasons: 
\begin{enumerate}
\item The state space of the centralized system to be abstracted is $\mathcal{X}^N \subseteq \mathbb{R}^{Nn}$, which is i) harder to visualize and handle ii) not naturally related to the individual specifications. Thus, it is more “natural” to define the specifications through the individual transition system of each agent corresponding to a discretization of $\mathcal{X}$ and then form the product to obtain the possible consistent satisfying plans.
\item Additionally, many centralized abstraction frameworks are based on approximations of the system's reachable set from a given cell
over the transition time interval. These require in the general nonlinear case  global dynamics properties and may avoid taking into account the finer dynamics properties of the individual entities, which can lead to more conservative estimates for large scale systems. 
\end{enumerate}
\end{remark}

\subsection{Controller Synthesis} \label{sec: synthesis}

The proposed controller synthesis procedure is described with the following steps:

\begin{enumerate}
	\item $N$ TBAs $\mathcal{A}_i, \ i \in \mathcal{I}$ that accept all the timed runs satisfying the corresponding specification formulas $\varphi_i, i \in \mathcal{I}$ are constructed.
	\item A B\"uchi WTS $\tilde{\mathcal T}_i = \mathcal{T}_i \otimes \mathcal{A}_i$ (see Def. \ref{def: buchi_WTS} below) for every $i \in \mathcal{I}$ is constructed. The accepting runs of $\tilde{\mathcal T}_i$ are the individual runs of the $\mathcal{T}_i$ that satisfy the corresponding MITL formula $\varphi_i, \ i \in \mathcal{I}$.
    \item We pick a set of accepting runs $\{\widetilde{r}^t_1, \ldots, \widetilde{r}^t_N\}$ from Step~2. We check if they are consistent according to Def. \ref{def: consistent_runs}. If this is true then we proceed with Step 5. If this is not true then we repeat Step 3 with a different set of accepting runs. At worst case, the number of repetitions that should be performed is finite; if a consistent set of accepting runs is not found, we proceed with the less efficient centralized, yet complete, procedure in Step 4.
	\item We create the product $\widetilde{\mathcal{T}}_p = \mathcal{T}_p \otimes \mathcal{A}_p$ where $\mathcal{A}_p$ is the TBA that accepts all the words that satisfy the formula $\varphi = \varphi_1 \wedge \ldots \wedge \varphi_N$. An accepting run $\widetilde{r}_p$ of the product is projected into the accepting runs $\{\widetilde{r}_1, \ldots, \widetilde{r}_N\}$. If there is no accepting run found in  $\mathcal{T}_p \otimes \mathcal{A}_p$, then Problem \ref{problem: basic_prob} has no solution.	
	\item The abstraction procedure allows to find an explicit feedback law for each
	transition in $\mathcal T_i$. Therefore,
	an accepting run $\widetilde{r}^t_i$ in $\mathcal T_i$ that takes the form of a sequence of transitions is realized in the system in \eqref{eq: system} via the corresponding sequence of feedback laws.
\end{enumerate}

In order to construct the Buchi WTSs $\widetilde{\mathcal{T}}_p$ and $\widetilde{\mathcal{T}}_i, i \in \mathcal{I}$ that were presented in Steps 2 and 4, we consider the following definition:
\begin{definition} \label{def: buchi_WTS}
	Given a WTS $\mathcal{T}_i =(S_i, S_{i}^{\text{init}}, Act_i, \longrightarrow_i, d_i, AP_i, \hat L_i)$, and a TBA $\A_i = (Q_i,  Q^\text{init}_i, C_i, \\ Inv_i, E_i, F_i, AP_i, \mathcal{L}_i)$ with $|C_i|$ clocks and let $C^{\mathit{max}}_i$ be the largest constant appearing in $\A_i$. Then, we define their \textit{B\"uchi WTS} $\widetilde{\T}_i = \mathcal{T}_i \otimes \A_i = (\widetilde{S}_i, \widetilde{S}_{i}^{\init}, \widetilde{Act}_i, {\rightsquigarrow}_{i}, \widetilde{d}_i, \widetilde{F}_i, AP_i, \widetilde{L}_i)$ as follows:
	\begin{itemize}
		\item {$\widetilde{S}_i \subseteq \{(s_i, q_i) \in S_i \times Q_i : \hat{L}_i(s_i) = \mathcal{L}_i(q_i)\} \times \mathbb{T}_\infty^{|C_i|} $.}
		\item $\widetilde{S}_{i}^{\init} = S_i^{\init} \times Q_i^{\init} \times \underbrace{\{0\} \times \ldots \times \{0\}}_{|C_i| \ products}$.
		\item $\widetilde{Act}_i = Act_i$.
		\item $(\widetilde{q}, {\bf{I}_i}, \widetilde{q} ') \in {\rightsquigarrow}_i$ iff
		\begin{itemize}
			\item[$\circ$] $\widetilde{q} = (s, q, \nu_1, \ldots, \nu_{|C_i|}) \in \widetilde{S}_i$, \\ $\widetilde{q} ' = (s', q', \nu_1', \ldots, \nu_{|C_i|}') \in \widetilde{S}_i$,
			\item[$\circ$] ${\bf{I}_i} \in Act_i$,
			\item[$\circ$] $(s, {\bf{I}_i}, s') \in \longrightarrow_i$, and
			\item[$\circ$] there exists $\gamma, R$, such that $(q, \gamma, R, q') \in E_i$, $\nu_1,\ldots,\nu_{|C_i|} \models \gamma$, $\nu_1',\ldots,\nu_{|C_i|}' \models Inv_i(q')$, and for all $i \in \{1,\ldots, |C_i|\}$
			\begin{equation*}
			\nu_i' =
			\begin{cases}
			0,      & \text{if } c_i \in R \\
			\nu_i + d_i(s, s'), &  \text{if }  c_i \not \in R \text{ and } \\ & \nu_i + d_i(s, s') \leq C^{\mathit{max}}_i \\			\infty, & \text{otherwise}.
			\end{cases}
			\end{equation*}
		\end{itemize}
		Then, $\widetilde{d}_i(\widetilde{q}, \widetilde{q}') = d_i(s, s')$.
		\item $\widetilde{F}_i = \{(s_i, q_i,\nu_1,\ldots,\nu_{|C_i|}) \in Q_i : q_i \in F_i\}$.
		\item $\widetilde{L}_i(s_i, q_i, \nu_1, \ldots, \nu_{|C_i|}) = \hat{L}_i(s_i)$.
	\end{itemize}
\end{definition}

The Buchi WTS $\widetilde{\mathcal{T}}_p$ is constructed in a similar way. Each B\"uchi WTS $\widetilde \T_i, i \in \mathcal I$ is in fact a WTS with a B\"uchi acceptance condition $\widetilde{F}_i$. A timed run of $\widetilde \T_i$ can be written as $\widetilde{r}_i^t = (q_i(0), \tau_i(0))(q_i(1), \tau_i(1)) \ldots$ using the terminology of Def. \ref{run_of_WTS}. It is \textit{accepting} if $q_i(i) \in \widetilde F_i$ for infinitely many $i \geq 0$.
An accepting timed run of  $\widetilde{\T}_i$ projects onto a timed run of $\T_i$ that satisfies the local specification formula $\varphi_i$ by construction. Formally, the following lemma, whose proof follows directly from the construction and and the principles of automata-based LTL model checking (see, e.g., \cite{katoen}), holds:

\begin{lemma} \label{eq: lemma_1}
	Consider an accepting timed run $\widetilde{r}_i^t = (q_k(0), \tau_i(0))(q_i(1), \tau_i(1)) \ldots$ of the B\"uchi WTS $\widetilde \T_k$ defined above, where $q_i(k) = (r_i(k), s_i(k), \nu_{i, 1}, \ldots, \nu_{i, M_i})$ denotes a state of $\mathcal{\widetilde T}_i$, for all $k \geq 1$. 
	The timed run $\widetilde{r}_i^t$ projects onto the timed run $r_i^t = (r_i(0), \tau_i(0))(r_i(1), \tau_i(1)) \ldots$ of the WTS $\mathcal{T}_i$ that produces the timed word $w(r_i^t) = (L_i(r_i(0)), \tau_i(0))(L_i(r_i(1)), \tau_i(1)) \ldots$ accepted by the TBA $\mathcal{A}_i$ via its run $\rho_i = s_i(0)s_i(1) \ldots$. Vice versa, if there exists a timed run $r_k^t = (r_k(0),\tau_k(0))(r_k(1),\tau_k(1))\ldots$ of the WTS $\T_k$ that produces a timed word $w(r_k^t) = (L_k(r_k(0)), \tau_i(0))(L_i(r_i(1)), \tau_i(1)) \ldots$ accepted by the TBA $\A_i$ via its run $\rho_i = s_i(0)s_i(1)\ldots$ then there exist the accepting timed run $\widetilde{r}_i^t = (q_i(0), \tau_i(0))(q_i(1), \tau_i(1)) \ldots$ of $\widetilde{\T}_i$, where $q_i(i)$ denotes $(r_i(k),s_i(k),\nu_{i,1}(i), \ldots, \nu_{i,M_i}(k))$ in $\widetilde{\T}_i$.
\end{lemma}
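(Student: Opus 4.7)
The plan is to prove both directions of this equivalence by a straightforward induction argument that exploits the definition of the product construction $\widetilde{\mathcal{T}}_i = \mathcal{T}_i \otimes \mathcal{A}_i$ in Definition \ref{def: buchi_WTS}. The key observation is that every state of $\widetilde{\mathcal{T}}_i$ bundles together a state of $\mathcal{T}_i$, a location of $\mathcal{A}_i$, and a clock valuation, while every transition of $\widetilde{\mathcal{T}}_i$ simultaneously mimics a transition in $\mathcal{T}_i$ (with duration $d_i$) and a time-plus-discrete transition in $\mathcal{A}_i$ whose clock guard is satisfied and whose reset set is respected. Hence the two components are projections that evolve synchronously by design.

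First I would handle the forward direction. Given an accepting timed run $\widetilde r_i^t = (q_i(0),\tau_i(0))(q_i(1),\tau_i(1))\ldots$ of $\widetilde{\mathcal{T}}_i$ with $q_i(k)=(r_i(k),s_i(k),\nu_{i,1}(k),\ldots,\nu_{i,M_i}(k))$, I project onto the $\mathcal{T}_i$-component $r_i^t=(r_i(0),\tau_i(0))(r_i(1),\tau_i(1))\ldots$ and onto the $\mathcal{A}_i$-component $\rho_i = s_i(0)s_i(1)\ldots$. By the definition of $\rightsquigarrow_i$, for every $k\ge 0$ there exists an action ${\bf I}_i(k)$ with $(r_i(k),{\bf I}_i(k),r_i(k+1))\in\longrightarrow_i$, so $r_i^t$ is a legitimate timed run of $\mathcal{T}_i$. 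Moreover, still by the definition, there exists a clock constraint $\gamma$ and a reset set $R$ with $(s_i(k),\gamma,R,s_i(k+1))\in E_i$, the valuations at step $k$ satisfy $\gamma$, the valuations at step $k{+}1$ satisfy $Inv_i(s_i(k+1))$, and the update rule for $\nu_{i,j}$ exactly matches the TBA semantics of a time transition of duration $d_i(r_i(k),r_i(k+1))$ followed by a discrete transition along the chosen edge. Therefore $\rho_i$ is a run of $\mathcal{A}_i$ on the timed word $w(r_i^t)=(\hat L_i(r_i(0)),\tau_i(0))(\hat L_i(r_i(1)),\tau_i(1))\ldots$; the label-matching constraint $\hat L_i(s)=\mathcal{L}_i(q)$ built into $\widetilde{S}_i$ guarantees that $w(r_i^t)$ is indeed the timed word read by $\rho_i$. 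Finally, because $\widetilde F_i=\{(s,q,\nu)\in\widetilde S_i:q\in F_i\}$, the Büchi acceptance of $\widetilde r_i^t$ immediately yields $s_i(k)\in F_i$ infinitely often, so $\rho_i$ is accepting and $w(r_i^t)$ is accepted by $\mathcal{A}_i$.

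For the converse direction I would proceed by explicit construction. Suppose $r_i^t=(r_i(0),\tau_i(0))(r_i(1),\tau_i(1))\ldots$ is a timed run of $\mathcal{T}_i$ whose timed word $w(r_i^t)$ is accepted by $\mathcal{A}_i$ via the accepting run $\rho_i = s_i(0)s_i(1)\ldots$. Starting from $q_i(0)=(r_i(0),s_i(0),0,\ldots,0)\in \widetilde S_i^{\init}$, I would inductively define $q_i(k+1)=(r_i(k+1),s_i(k+1),\nu_{i,1}(k+1),\ldots,\nu_{i,M_i}(k+1))$ where the clock valuations are obtained by applying the update rule of Definition \ref{def: buchi_WTS} along the edge of $\mathcal{A}_i$ used at step $k$ in $\rho_i$. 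Well-posedness of this construction, i.e., membership of each $q_i(k)$ in $\widetilde S_i$ and existence of the transition in $\rightsquigarrow_i$, follows because the guard, invariant, and reset conditions are satisfied by hypothesis on $\rho_i$, and the label-matching condition holds because $w(r_i^t)$ is read by $\rho_i$. The resulting sequence is a timed run $\widetilde r_i^t$ of $\widetilde{\mathcal{T}}_i$; since $s_i(k)\in F_i$ infinitely often by acceptance of $\rho_i$, the projection of $\widetilde r_i^t$ onto $\widetilde F_i$ is visited infinitely often, so $\widetilde r_i^t$ is accepting.

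No single step is truly difficult here; the argument is a direct unpacking of Definition \ref{def: buchi_WTS}. The only mild subtlety I anticipate is the treatment of clocks that grow beyond $C_i^{\mathit{max}}$ and are therefore collapsed to $\infty$ in the product: one must observe that such clocks can never thereafter participate meaningfully in any guard of $\mathcal{A}_i$ (since all constants in $\mathcal{A}_i$ are bounded by $C_i^{\mathit{max}}$), so the saturation is semantically harmless and the forward/backward correspondence is preserved. This is also the standard device that keeps $\widetilde S_i$ finite enough to serve as a basis for algorithmic synthesis, and it is precisely the ingredient that makes the lemma a clean restatement of the classical automata-based LTL/MITL model checking equivalence cited in \cite{katoen}.
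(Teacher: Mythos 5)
Your proof is correct and matches the paper's approach: the paper does not write out a proof at all, stating only that the lemma ``follows directly from the construction and the principles of automata-based LTL model checking,'' and your argument is precisely that direct unpacking of Definition~\ref{def: buchi_WTS} (projection for the forward direction, inductive reconstruction of the clock valuations for the converse). One minor imprecision: a clock saturated to $\infty$ can still satisfy guards of the form $c > \psi$ or $c \geq \psi$, so it is not that such clocks ``never participate meaningfully'' in guards, but rather that their exact value no longer affects the truth of any constraint with constants bounded by $C_i^{\mathit{max}}$ --- your conclusion that the saturation is semantically harmless is nevertheless right.
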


\begin{proposition}
	By following the procedure described in Sec. \ref{sec: synthesis} a sequence of controllers $v_1, \ldots, v_N$ can be designed (if there is a solution according to Steps 1-5) that guarantees the satisfaction of the formulas $\varphi_1, \ldots, \varphi_N$ of the agents $1, \ldots, N$ respectively, governed by dynamics as in \eqref{eq: system}. 
\end{proposition}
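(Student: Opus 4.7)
The plan is to verify that each stage of the five-step procedure of Sec.~\ref{sec: synthesis} preserves satisfaction of the specifications and produces an executable control signal, so that the final claim follows by chaining the stages. I would first observe that by the MITL-to-TBA translation recalled in Sec.~\ref{sec: timed_automata}, each $\mathcal{A}_i$ constructed in Step~1 accepts exactly those timed words that satisfy $\varphi_i$. Then, combining Definition~\ref{def: buchi_WTS} and Lemma~\ref{eq: lemma_1}, any accepting timed run $\widetilde{r}_i^t$ of the B\"uchi WTS $\widetilde{\mathcal{T}}_i = \mathcal{T}_i \otimes \mathcal{A}_i$ from Step~2 projects onto a timed run $r_i^t$ of $\mathcal{T}_i$ whose generated timed word is accepted by $\mathcal{A}_i$, hence satisfies $\varphi_i$. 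This reduces the problem to producing a tuple $(\widetilde{r}_1^t, \ldots, \widetilde{r}_N^t)$ of accepting runs that can be jointly realized by the coupled system~\eqref{eq: system}.

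Next I would handle joint executability, which is exactly what Steps~3 and~4 enforce. A tuple of accepting individual runs is physically realizable only if the cell occupancies at every sampling instant $j \cdot \delta t$ correspond to a run of the product WTS $\mathcal{T}_p$ of Definition~\ref{def:product_TS}; by Definition~\ref{def: consistent_runs}, this is precisely the consistency check of Step~3. Since each $\widetilde{\mathcal{T}}_i$ has a finite B\"uchi structure, only finitely many lasso-shaped accepting runs must be inspected, so the enumeration terminates. If no consistent tuple is found, Step~4 falls back on $\widetilde{\mathcal{T}}_p = \mathcal{T}_p \otimes \mathcal{A}_p$, where $\mathcal{A}_p$ accepts $\varphi_1 \wedge \cdots \wedge \varphi_N$; applying Lemma~\ref{eq: lemma_1} to $\widetilde{\mathcal{T}}_p$ and projecting, any accepting run of $\widetilde{\mathcal{T}}_p$ yields by construction a consistent tuple whose individual projections satisfy the respective $\varphi_i$, while emptiness of the language of $\widetilde{\mathcal{T}}_p$ certifies that no jointly satisfying execution exists in the abstraction, justifying the ``no solution'' declaration.

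The third ingredient is the lift of the discrete plan to continuous control. The abstraction in Sec.~\ref{sec: abstration} was shown to be well-posed for the dynamics~\eqref{eq: system}: conditions (C1)--(C3) are explicitly verified, and the choice of $d_{\max}$ and $\delta t$ in~\eqref{dmax}--\eqref{deltat} implies that for every enabled transition $(l_i, \mathbf{l}_i, l_i') \in {\longrightarrow_i}$ there exists a feedback law $v_i(t)$ with $\|v_i(t)\| \le v_{\max}$ that drives agent $i$ from its current position in $\hat{S}_{l_i}$ to some point of $\hat{S}_{l_i'}$ in exact time $\delta t$, irrespective of the choices made by its neighbors within their own cells. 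Concatenating these feedback laws along the sequence of transitions of $\widetilde{r}_i^t$ (Step~5) produces a closed-loop trajectory $x_i(t)$ whose timed service word, by Remark~\ref{lemma:compliant_WTS_runs_with_trajectories}, is precisely the timed word generated by $r_i^t$; hence $x_i(t)$ satisfies $\varphi_i$ in the sense of Definition~\ref{def:mitl_semantics}. Theorem~\ref{theorem: theorem_1} further ensures that $\tilde{x}(t) \in \mathcal{X}$ after a finite transient, preserving the region in which the abstraction is valid.

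The main obstacle is the second paragraph: one must argue carefully that consistency in $\mathcal{T}_p$ exactly captures the physical coupling in~\eqref{eq: system}. Well-posedness of a transition must depend only on the \emph{cell indices} $\mathbf{l}_i$ of the neighbors and not on their exact positions, and it is this uniformity that lets the projection-lifting argument go through while using only the cell-level information tracked by $\mathcal{T}_p$. A second delicate point is the conjunction TBA of Step~4: the assumption $\Sigma_i \cap \Sigma_j = \emptyset$ ensures that the alphabet of $\mathcal{A}_p$ is $2^{\bigcup_i \Sigma_i}$ and that a single timed word over this alphabet projects unambiguously onto each $\Sigma_i$, which is what makes the reduction of joint satisfaction to single-run acceptance in $\widetilde{\mathcal{T}}_p$ sound.
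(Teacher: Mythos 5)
Your proposal is correct and follows essentially the same route the paper intends: the paper states this Proposition without an explicit proof, relying on exactly the chain you spell out (MITL-to-TBA correctness, Lemma~\ref{eq: lemma_1} for projecting accepting runs of $\widetilde{\mathcal{T}}_i$, consistency via the product WTS of Definitions~\ref{def:product_TS}--\ref{def: consistent_runs}, and realization of each discrete transition by a feedback law guaranteed by the well-posed abstraction of Sec.~\ref{sec: abstration}). Your writeup is a faithful, somewhat more detailed elaboration of that implicit argument, including the correct identification of the two delicate points (uniformity of well-posedness over cell configurations, and the disjointness of the service alphabets for the conjunction automaton in Step~4).
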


\subsection{Complexity} \label{sec:complexity}

Our proposed framework can handle all the expressivity of the MITL formulas according to the semantics of Definition \ref{def:mitl_semantics}. Denote by $|\varphi|$ the length of an MITL formula $\varphi$. A TBA $\mathcal{A}_i, i \in \mathcal{I}$ can be constructed in space and time $2^{\mathcal{O}(|\varphi_i)|}, i \in \mathcal{I}$. So by denoting with $\varphi_{\text{max}} = \text{max} \{ |\varphi_i\}, i \in \mathcal{I}$ the MITL formula with the longest length we have that the complexity of Step 1 is $2^{\mathcal{O}(|\varphi_{\text{max}})|}$. The model checking of Step 2 costs $\mathcal{O}(|\mathcal{T}_i| \cdot 2^{|\varphi_i|}), i \in \mathcal{I}$ where $|\mathcal{T}_i|$ is the length of the WTS $\mathcal{T}_i$ i.e., the number of its states. Thus, $\mathcal{O}(|\mathcal{T}_i| \cdot 2^{|\varphi_i|}) = \mathcal{O}(|S_i| \cdot 2^{|\varphi_i|}) = \mathcal{O}(|\hat{\mathbb{I}}| \cdot 2^{|\varphi_i|})$. The worst case of Step 2 costs $\mathcal{O}(|\mathcal{T}_{\text{max}}| \cdot 2^{|\varphi_{\text{max}}|})$ where $|\mathcal{T}_{\text{max}}|$ is the number of the states of the WTS which corresponds to the longest formula $\varphi_{\text{max}}$. Due to the fact that all the WTSs in Step 2 have the same number of states, it holds that the worst case complexity of Step 2 costs $\mathcal{O}(|\hat{\mathbb{I}}| \cdot 2^{|\varphi_{\text{max}}|})$. By denoting with $R_{iter}$ the finite number of repetitions of Step 3, we have the best case complexity as $\mathcal{O}(R_{\text{iter}} \cdot |\hat{\mathbb{I}}| \cdot 2^{|\varphi_{\text{max}}|})$, since the Step 3 is more efficient than Step 4. The worst case complexity of our proposed framework is when Step 4 is followed, which is $\mathcal{O}(|\hat{\mathbb{I}}|^N \cdot 2^{|\varphi_{\text{max}}|})$ where $|\hat{\mathbb{I}}|$ is the number of cells of the cell decomposition $\hat{S}$.

\begin{figure}[h]
	\centering
	\includegraphics[scale = 0.60]{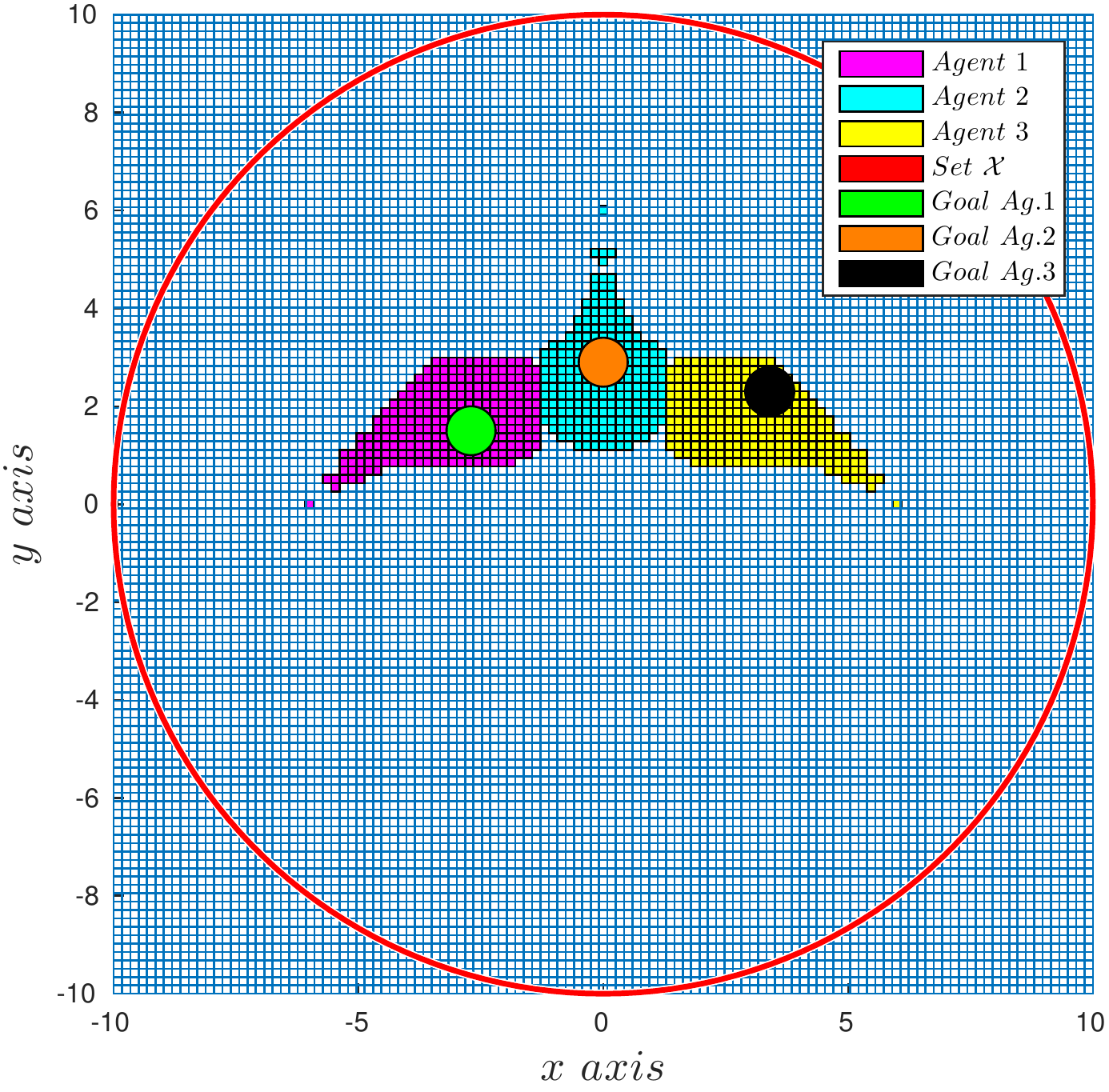}
	\caption{Space discretization, goal regions and reachable sets for each agent in a time horizon of $11 \delta t$ steps}
	\label{fig:simulation}
\end{figure}

\section{Simulation Results} \label{sec: simulation_results}

For a simulation example, a system of three agents with $x_i \in \mathbb{R}^2, \ i \in \mathcal{I} = \{1, 2, 3\}, \mathcal{E} = \{(1,2), (2,3)\}, \mathcal{N}(1) = \{2\} = \mathcal{N}(3), \mathcal{N}(2) = \{1, 3\}$ is considered. Their dynamics are given as $\dot{x}_1 = x_2-x_1+v_1, \dot{x}_2 = x_1+x_3-2x_2+v_2$ and $\dot{x}_3 = x_2-x_3+v_3$. The simulation parameters are set to $\bar{R} = 10, M = 20, v_{\text{max}} = 10, L_1 = \sqrt{2}, L_2 = 2, \delta t = 0.2$. The time step $\delta t$ is chosen during the abstraction process according to the formulas \eqref{dmax}, \eqref{deltat} and it is not chosen with reference to satisfaction of the MITL formulas. The workspace $[-10,10] \times [-10, 10] \subseteq \mathbb{R}^2$ is partitioned into cells and the initial agents' positions are set to $(-6,0), (0,6)$ and $(6,0)$ respectively. The specification formulas are set to $\varphi_1 = \Diamond_{[0.5, 1.7]} \{\rm green\}, \varphi_2 = \Diamond_{[1.0, 1.4]} \{\rm orange\}, \varphi_3 = \Diamond_{[0.7, 1.8]} \{black\}$ respectively and their corresponding TBAs are given in Fig. \ref{fig:TBA_example}.  The abstraction presented in this paper, the reachable cells of each agent as well as the goal regions are depicted in Fig. \ref{fig:simulation}. It can be observed that not all the individual runs satisfy the desired specification. By applying the five-step controller synthesis procedure that was presented in Sec. \ref{sec: solution}, the individual run of each agent satisfy the formulas $\varphi_1, \varphi_2$ and $\varphi_3$ in $6 \delta t$, $6 \delta t$ and $5 \delta t$ respectively. The simulation is performed in a horizon of $11 \delta t$ steps (as the steps that explained in the Example \ref{ex: example_2}). The product WTS has $45 \times 10^4$ states. The simulations were carried out in MATLAB Environment on a desktop with 8 cores, 3.60GHz CPU and 16GB of RAM.

\section{Conclusions and Future Work} \label{sec: conclusions}

A systematic method of both abstraction and controller synthesis of dynamically coupled multi-agent path-planning has been proposed, in which timed constraints of fulfilling a high-level specification are imposed to the system. The solution involves initially a boundedness analysis and secondly the abstraction of each agent's motion into WTSs and automata construction. The simulation example demonstrates our solution approach. Future work includes further computational improvement of the abstraction method and more complicated high-level tasks being imposed to the agents in order to exploit the expressiveness of MITL formulas.






\bibliography{references}
\bibliographystyle{ieeetr}
\addtolength{\textheight}{-12cm}

\end{document}